\def\etal{\emph{et~al.}}
\def\etal{\textit{et~al.}}
\def\poly{\mathop{\mathrm{poly}}}
\def\polylog{\mathop{\mathrm{polylog}}}
\def\eps{\varepsilon}
\def\reals{\mathbb{R}}
\def\ints{\mathbb{Z}}
\def\abs#1{\mathopen| #1 \mathclose|}		
\def\norm#1{\mathopen\| #1 \mathclose\|}	
\def\Ceil#1{\left\lceil #1 \right\rceil}
\def\Paren#1{\left( #1 \right)}		
\def\tsupply{\phi}
\def\fsupply{\phi}
\def\arcto{\mathord\shortrightarrow}
\def\arc#1#2{#1\arcto#2}
\def\cost{c}
\def\supp{\operatorname{supp}}
\def\alive#1{{#1}^\text{\ding{96}}}
\def\dead#1{{#1} \setminus \alive{#1}}
\def\star{\text{\ding{84}}}
\theoremstyle{plain}
\newtheorem{lemma}{Lemma}[section]
\newtheorem{theorem}[lemma]{Theorem}
\newtheorem{corollary}[lemma]{Corollary}
\numberwithin{figure}{section}
\renewcommand{\subparagraph}{\paragraph}
\def\EMPH#1{\textcolor{BrickRed}{{\emph{#1}}}}
\def\n@te#1{\textsf{\boldmath \textbf{$\langle\!\langle$#1$\rangle\!\rangle$}}\leavevmode}
\def\note#1{\textcolor{red}{\n@te{#1}}}
\def\cramped
\def\namedlabel#1#2{\begingroup
    #2%
    \def\@currentlabel{#2}%
    \phantomsection\label{#1}\endgroup
}
\title{Efficient Algorithms for Geometric Partial Matching%
\thanks{Work on this paper was supported by NSF under grants CCF-15-13816,
CCF-15-46392, and IIS-14-08846, by an ARO grant W911NF-15-1-0408, and by
BSF Grant 2012/229 from the U.S.-Israel Binational Science Foundation.}}
\author{
  Pankaj K.\ Agarwal\thanks{
  Department of Computer Science, Duke University. Email: \href{mailto:pankaj@cs.duke.edu,hc252@cs.duke.edu,axiao@cs.duke.edu}{\{pankaj, hc252, axiao\}@cs.duke.edu}.}
  \and
  Hsien-Chih Chang\footnotemark[2]
  \and
  Allen Xiao\footnotemark[2]
}
\date{\today}
\begin{document}

\maketitle

\begin{abstract}
Let $A$ and $B$ be two point sets in the plane of sizes $r$ and $n$ respectively (assume $r \leq n$), and let $k$ be a parameter.
A matching between $A$ and $B$ is a family of pairs in $A \times B$ so that any point of $A \cup B$ appears in at most one pair.
Given two positive integers $p$ and $q$, we define the cost of matching $M$ to be $\cost(M) = \sum_{(a, b) \in M}\norm{a-b}_p^q$ where $\norm{\cdot}_p$ is the $L_p$-norm.
The geometric partial matching problem asks to find the minimum-cost size-$k$ matching between $A$ and $B$.

We present efficient algorithms for geometric partial matching problem that work for any powers of $L_p$-norm matching objective:
An exact algorithm that runs in $O((n + k^2)\polylog n)$ time, and a $(1 + \eps)$-approximation algorithm that runs in $O((n + k\sqrt{k})\polylog n \cdot \log\eps^{-1})$ time.
Both algorithms are based on the primal-dual flow augmentation scheme; the main improvements involve using dynamic data structures to achieve efficient flow augmentations.
With similar techniques, we give an exact algorithm for the planar transportation problem running in $O(\min\{n^2, rn^{3/2}\}\polylog n)$ time.
\end{abstract}

\section{Introduction}

Given two point sets $A$ and $B$ in the plane, we consider the problem of finding
the minimum-cost partial matching between $A$ and $B$.
Formally, suppose $A$ has size $r$ and $B$ has size $n$ where $r \leq n$.
Let $G(A, B)$ be the undirected complete bipartite graph between
$A$ and $B$, and let the cost of edge $(a, b)$ be
$\EMPH{$c(a, b)$} = \norm{a-b}_p^q$, for some positive integers $p$ and $q$.
A \EMPH{matching} $M$ in $G(A, B)$ is a set of edges sharing no endpoints.
The \EMPH{size} of $M$ is the number of edges in $M$.
The cost of matching $M$, denoted \EMPH{$\cost(M)$}, is defined to be the sum of costs of edges in $M$.
For a parameter $k$, the problem of finding the minimum-cost
size-$k$ matching in $G(A, B)$ is called the \EMPH{geometric partial matching problem}.
We call the corresponding problem in general bipartite graphs (with arbitrary
edge costs) the \EMPH{partial matching problem}.%
\footnote{Partial matching is also called \EMPH{imperfect matching} or \EMPH{imperfect assignment} \cite{RT12,GHKT17}.}

We also consider the following generalization of bipartite matching.
Let $\tsupply:A \cup B \to \ints$ be an integral \EMPH{supply-demand function} with
positive value on points of $A$ and negative value on points of $B$, satisfying
$\sum_{a \in A} \tsupply(a) = - \sum_{b \in B} \tsupply(b)$.
Let $\EMPH{$U$} \coloneqq \max_{p \in A \cup B} \abs{\tsupply(p)}$.
A \EMPH{transportation map} is a function $\tau: A \times B \to \reals_{\geq 0}$
such that $\sum_{b \in B} \tau(a, b) = \tsupply(a)$ for all $a \in A$ and
$\sum_{a \in A} \tau(a, b) = -\tsupply(b)$ for all $b \in B$.
We define the cost of $\tau$ to be
\begin{equation*}
	\EMPH{$\cost(\tau)$} \coloneqq \sum_{(a, b) \in A \times B} c(a, b) \cdot \tau(a, b).
\end{equation*}
The \EMPH{transportation problem} asks to compute a transportation map of minimum cost.


\subparagraph{Related work.}
Maximum-size bipartite matching is a classical problem in graph algorithms.
Upper bounds include the $O(m\sqrt{n})$ time algorithm by
Hopcroft and Karp~\cite{HK73} and the $O(m \min\{\sqrt{m}, n^{2/3}\})$ time
algorithm by Even and Tarjan~\cite{ET75}, where $n$ is the
number of nodes and $m$ is the number of edges.
The first improvement in over thirty years was made by M{\k a}dry~\cite{M13},
which uses an interior-point algorithm, runs in $O(m^{10/7}\polylog n)$ time.

The Hungarian algorithm~\cite{Kuhn55} computes a minimum-cost maximum matching
in a bipartite graph in roughly $O(mn)$ time.
Faster algorithms have been developed,
such as the $O(m\sqrt{n}\log(nC))$ time algorithms by Gabow and
Tarjan~\cite{GT89} and the improved $O(m\sqrt{n}\log C)$ time algorithm by
Duan~\etal~\cite{DPS18} assuming the edge costs are integral;
here $C$ is the maximum cost of an edge.
Ramshaw and Tarjan~\cite{RT12} showed that the Hungarian algorithm can be extended to compute a minimum-cost partial
matching of size $k$ in $O(km + k^2\log r)$ time, where $r$ is the size of the smaller side of the bipartite graph.
They also proposed a cost-scaling algorithm for partial
matching that runs in time $O(m\sqrt{k}\log(kC))$, again assuming that costs
are integral.
By reduction to unit-capacity min-cost flow, Goldberg~\etal~\cite{GHKT17}
developed a cost-scaling algorithm for partial matching with an identical running time
$O(m\sqrt{k}\log(kC))$,
again only for integral edge costs.

In geometric settings, the Hungarian algorithm can be implemented to compute
an optimal perfect matching between $A$ and $B$ (assuming equal size)
in time $O(n^2\polylog n)$~\cite{KMRSS17} (see also \cite{Vaidya89,AES99}).
This algorithm computes an optimal size-$k$ matching in time $O(kn\polylog n)$.
Faster approximation algorithms have been developed for computing perfect
matchings in geometric settings \cite{Vaidya89,V98,AV04,SA12}.
Recall that the cost of the edges are the $q$th power of their $L_p$-distances.
When $q = 1$, the best algorithm to date by Sharathkumar and Agarwal~\cite{SA12m}
computes $(1+\eps)$-approximation to the value of optimal perfect matching in
$O(n\polylog n \cdot \poly\eps^{-1})$ expected time with high probability.
Their algorithm can also compute a $(1+\eps)$-approximate partial
matching within the same time bound.
For $q > 1$, the best known approximation algorithm to compute a perfect
matching runs in $O(n^{3/2}\polylog n \log(1/\eps))$ time \cite{SA12};
it is not obvious how to extend this algorithm to the partial matching setting.

The transportation problem can also be formulated as an instance of the
minimum-cost flow problem.
The strongly polynomial uncapacitated min-cost flow algorithm by
Orlin~\cite{O93} solves the transportation problem in
$O((m + n\log n) n\log n)$ time.
Lee and Sidford~\cite{LS14} give a weakly polynomial algorithm that runs in
$O(m\sqrt{n}\polylog(n, U))$ time, where $U$ is the maximum amount of node supply-demand.
Agarwal~\etal~\cite{AFPVX17, AFPVX17arxiv} showed that Orlin's algorithm can be
implemented to solve 2D transportation in time $O(n^2\polylog n)$.
In case of $O(1)$-dimension Euclidean space,
by adapting the Lee-Sidford algorithm, they developed a
$(1+\eps)$-approximation algorithm that runs in $O(n^{3/2} \poly\eps^{-1} \polylog(n, U))$ time.
They also gave a Monte-Carlo algorithm that computes an
$O(\log^2(1/\eps))$-approximate solution in $O(n^{1+\eps})$ time with
high probability.
Recently, Khesin, Niklov, and Paramonov~\cite{KNP19} obtained a
$(1+\eps)$-approximation in low-dimensional Euclidean space
that runs in randomized $O(n \poly\eps^{-1} \polylog(n, U))$ time.

\subparagraph{Our results.}
There are three main results in this paper.
First in Section~\ref{section:hung} we present an efficient algorithm for
computing an optimal partial matching in the plane.

\begin{theorem}
\label{theorem:hung}
Given two point sets $A$ and $B$ in the plane each of size at most $n$ and an
integer $k \leq n$, a minimum-cost matching of size $k$ between $A$ and $B$ can
be computed in $O((n + k^2)\polylog n)$ time.
\end{theorem}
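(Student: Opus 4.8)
The plan is to run the primal–dual (successive-shortest-augmenting-path) algorithm for partial matching of Ramshaw and Tarjan~\cite{RT12}, but to replace the shortest-path search and the dual update in each of its $k$ iterations by a handful of operations on dynamic geometric data structures, so that an iteration costs $O(k\,\polylog n)$ rather than the $\Theta(m)=\Theta(n^2)$ of the textbook implementation. Throughout we maintain a matching $M$ and dual potentials $y\colon A\cup B\to\reals$ obeying the complementary-slackness invariants $\bar c(a,b)\coloneqq c(a,b)-y(a)-y(b)\ge 0$ for all $(a,b)$ and $\bar c(a,b)=0$ for $(a,b)\in M$, starting from $M=\emptyset$, $y\equiv 0$ (valid since $c\ge 0$). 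One iteration runs Dijkstra in the residual graph — whose edge lengths are the reduced costs $\bar c\ge 0$ — from the set of free $A$-points, to find a shortest alternating path ending at a free $B$-point; it augments $M$ along this path and performs the standard dual update $y(v)\leftarrow y(v)+(\ell^\ast-\ell(v))$ for settled $v\in A$ and $y(v)\leftarrow y(v)-(\ell^\ast-\ell(v))$ for settled $v\in B$, where $\ell(\cdot)$ are the Dijkstra distances and $\ell^\ast$ the distance of the reached free $B$-point. By the classical theory of successive shortest paths, after the $i$-th iteration $M$ is a minimum-cost matching of size $i$, so it suffices to implement one iteration within the stated budget (after an $O(n\,\polylog n)$ preprocessing).

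Two structural facts make this possible. First, apart from the free $A$-points, a single iteration settles only $O(k)$ vertices: the internal vertices of an alternating path are matched, the unique residual edge into a matched $A$-point is its matching edge (of reduced cost $0$), there are at most $k$ matched points on each side, and the search stops at the first free $B$-point. Second, since the matching only grows, at the start of every iteration \emph{all} free $A$-points carry the \emph{same} potential $y_0$ (the sum of the previous $\ell^\ast$'s), and the dual update raises this common value by $\ell^\ast$; we therefore keep it as a single global offset and never touch free $A$-points individually. Consequently each iteration needs to create, update, or query data-structure records for only $O(k)$ points.

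To make Dijkstra's extract-min an $O(\polylog n)$ operation we maintain a single \emph{dynamic bichromatic closest-weighted-pair} structure between the currently ``active'' $A$-points and the currently unsettled $B$-points, under the additively weighted ``distance'' $\bigl(\ell(a)-y(a)\bigr)+c(a,b)+\bigl(-y(b)\bigr)=\ell(a)+\bar c(a,b)$; its minimum pair is precisely the next $B$-point to settle, together with its tentative label and its Dijkstra predecessor. The active $A$-points are all free $A$-points (each with the common weight $-y_0$, via the offset) together with the matched $A$-points already settled in this iteration (each with weight $\ell(a)-y(a)$); settling a matched $B$-point $b^\ast$ costs one deletion of $b^\ast$ from the $B$-side plus one insertion on the $A$-side activating its partner $a'=M(b^\ast)$ with $\ell(a')=\ell(b^\ast)$ (the matching edge has reduced cost $0$), and settling a free $B$-point ends the iteration. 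Such a structure is built in $O(n\,\polylog n)$ and supports insertions and deletions in $O(\polylog n)$ amortized time via Eppstein's dynamic closest-pair technique on top of a dynamic (additively weighted) nearest-neighbour structure for the cost $c(a,b)=\norm{a-b}_p^q$; for fixed positive integers $p,q$ the predicate $\norm{a-b}_p^q\le\theta$ is a Boolean combination of $O(1)$ polynomial inequalities of bounded degree in the coordinates and in $\theta$, so such a nearest-neighbour structure with $\polylog n$ operations is available~\cite{AES99,KMRSS17}. Between iterations we re-insert into the $B$-side the $O(k)$ $B$-points settled in the previous iteration, with their updated weights, remove from the $A$-side the $O(k)$ $A$-points that were settled (or newly matched), and advance the offset — again $O(k)$ updates, $O(k\,\polylog n)$ time; tracing the alternating path back through the recorded predecessors and updating $M$ and the $O(k)$ changed potentials is $O(k\,\polylog n)$ as well. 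Summing $O(k\,\polylog n)$ over the $k$ iterations and adding the $O(n\,\polylog n)$ preprocessing gives the theorem.

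The crux is the treatment of the $\Theta(r)$ free $A$-points. A direct Dijkstra would settle and relax all of them in every iteration ($\Theta(rk)$ overall), and a naive lazy heap keyed by one nearest-neighbour pointer per active $A$-point could suffer $\Theta(k)$ stale re-queries per settlement, hence $\Theta(k^3)$ time; collapsing all of this into one dynamic bichromatic closest-pair structure, handled by an offset for the common free-$A$ potential and touched only $O(k)$ times per iteration, is what forces each Dijkstra step down to $O(\polylog n)$. The remaining work is to check that these reductions are mutually consistent — in particular that the offset/potential bookkeeping preserves the complementary-slackness invariants exactly across iterations, and that the dynamic nearest-neighbour black box indeed applies to $\norm{\cdot}_p^q$ and to the additive weights arising here — the rest being the classical primal–dual correctness argument.
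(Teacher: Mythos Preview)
Your proposal is correct and follows essentially the same approach as the paper: run the Ramshaw--Tarjan/Hungarian primal--dual scheme, implement each Dijkstra relaxation via a dynamic weighted BCP data structure (citing \cite{AES99,KMRSS17}), and---crucially---recycle the BCP between iterations with only $O(k)$ insertions/deletions rather than rebuilding it, using a global offset to avoid touching the free $A$-points individually. The paper phrases the recycling as a ``rewinding mechanism'' (literally undoing the previous iteration's BCP operations in reverse) and handles the offset via Vaidya's stored-value/$\delta$ trick, whereas you state directly that free $A$-points share a common potential $y_0$ and describe the between-iteration fixups as re-inserting the $O(k)$ settled $B$-points and removing the $O(k)$ settled $A$-points; these are the same idea in slightly different clothing.
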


We use \emph{bichromatic closest pair (BCP)} data structures to implement the Hungarian algorithm efficiently, similar to Agarwal~\etal~\cite{AES99} and Kaplan~\etal~\cite{KMRSS17}.
But unlike their algorithms which take $\Omega(n)$ time to find an
augmenting path, we show that after $O(n\polylog n)$ time preprocessing,
an augmenting path can be found in $O(k\polylog n)$ time.
The key is to recycle (rather than rebuild) our data structures from one
augmentation to the next.
We refer to this idea as the \emph{rewinding mechanism}.

\medskip

Next in Sections~\ref{section:goldberg},
we obtain a $(1+\eps)$-approximation algorithm for the geometric partial
matching problem in the plane by providing an efficient implementation of the
unit-capacity min-cost flow algorithm by Goldberg~\etal~\cite{GHKT17}.

\begin{theorem}
\label{theorem:gmcm}
Given two point sets $A$ and $B$ in $\reals^2$ each of size at most $n$,
an integer $k \leq n$, and a parameter $\eps > 0$, a $(1+\eps)$-approximate
min-cost matching of size $k$ between $A$ and $B$ can be computed in
$O((n + k\sqrt{k})\polylog n \cdot \log\eps^{-1})$ time.
\end{theorem}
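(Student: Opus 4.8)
The plan is to give a geometry‑aware implementation of the unit‑capacity cost‑scaling min‑cost flow algorithm of Goldberg~\etal~\cite{GHKT17}, driving its flow augmentations with the same dynamic bichromatic‑closest‑pair (BCP) structures built for Theorem~\ref{theorem:hung}. We first encode a size‑$k$ matching as a flow: add a source $s$ with unit‑capacity cost‑$0$ arcs into every point of $A$, a sink $t$ with unit‑capacity cost‑$0$ arcs out of every point of $B$, and orient each pair $(a,b)$ from $A$ to $B$ with unit capacity and cost $c(a,b)=\norm{a-b}_p^q$; a minimum‑cost feasible $s$--$t$ flow of value $k$ is then exactly a minimum‑cost size‑$k$ matching. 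Next, a standard scaling‑and‑truncation argument — compute a $\poly(n)$‑factor estimate $\Lambda$ of the optimum (e.g.\ from a quadtree/WSPD matching), delete every edge too expensive to appear in an optimum of value $\le\Lambda$, round the surviving costs up onto a fine multiplicative grid, and re‑solve $O(\log n)$ times to shrink the grid to $\Theta(\eps)\cdot\mathrm{OPT}$ — lets us assume, at the price of a $(1+\eps)$ factor and a $\polylog n$ overhead, that all edge costs are integers in $\{0,1,\dots,C\}$ with $C=\poly(n/\eps)$; in particular $\log(kC)=O(\log n+\log\eps^{-1})$.

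We then run \Refine\ for $O(\log(kC))$ scales, halving the optimality parameter $\eps$ at each scale. As in \cite{GHKT17,GT89,RT12}, one scale maintains a vertex price function together with an $\eps$‑optimal flow of value $k$, and restores $(\eps/2)$‑optimality by performing $O(\sqrt k)$ augmentation rounds — each round either a blocking‑flow phase along shortest (fewest‑arc) admissible augmenting paths, or a single shortest admissible augmentation — which together reroute $O(k)$ units of flow over the scale, after which the usual Hopcroft--Karp/Gabow--Tarjan counting argument certifies $(\eps/2)$‑optimality. The only nontrivial step is to carry out the admissible‑graph searches in these rounds without ever touching all of $G(A,B)$.

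To that end we maintain dynamic BCP / additively‑weighted nearest‑neighbor structures over $A$ and over $B$ — the same structures used for Theorem~\ref{theorem:hung} — supporting, for a query point $a$ and the current prices $y$, the operation ``return the as‑yet‑unvisited $b\in B$ minimizing $c(a,b)+y(b)$'' (and symmetrically from $B$), plus point deletions, in $\polylog n$ amortized time. We rebuild them at the start of each scale (when all prices are reset), for a total of $O(n\polylog n\cdot\log(kC))$ time, and absorb the price changes occurring within a scale by bucketing vertices by price level — this is precisely the rewinding mechanism of Section~\ref{section:hung}. A search inside \Refine\ then becomes a Dijkstra/BFS over admissible residual arcs in which every newly discovered vertex is produced by one such query and every traversed matching edge by an $O(1)$ pointer step. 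Because an admissible augmenting path alternates matched and unmatched edges, every point of $B$ it visits other than its free endpoint is matched, and at most $k$ points on each side are matched at any time; hence a round inspects only $O(k)$ relevant vertices, together with the $\le k$ free sinks on its paths, and costs $O(k\polylog n)$.

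Summing up, the running time is $O(n\polylog n\cdot\log(kC))$ for preprocessing plus $O(\log(kC))\cdot O(\sqrt k)\cdot O(k\polylog n)$ for the scaling loop, which is $O((n+k\sqrt k)\polylog n\cdot\log\eps^{-1})$. The main obstacle is the third paragraph: one must verify that the admissible‑graph searches of \Refine\ can be serviced by a BCP oracle that inspects only $O(k)$ vertices per round rather than all $n$ — in particular that the unmatched points of $A$ and the $\Theta(n)$ free points of $B$ need not be enumerated — and that the within‑scale price updates, which change both the admissible arc set and the weights $y(b)$ appearing inside the nearest‑neighbor queries, can be handled at $\polylog n$ amortized cost without a rebuild. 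Realizing these weighted queries correctly for the powered $L_p$ cost (non‑Euclidean when $q>1$) is a further, by now standard, technical point, inherited from the proof of Theorem~\ref{theorem:hung}.
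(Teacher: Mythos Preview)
Your high-level plan---reduce to unit-capacity min-cost flow, run the Goldberg~\etal\ cost-scaling framework for $O(\log(n/\eps))$ scales with $O(\sqrt{k})$ blocking-flow rounds each, and drive each round with BCP structures---matches the paper. The gap is exactly the obstacle you yourself flag in the last paragraph and do not resolve: the claim that ``a round inspects only $O(k)$ relevant vertices.''

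Your justification confuses the structure of a single augmenting path with the behavior of the search that finds it. It is true that an augmenting path touches $O(k)$ matched vertices, but a Dijkstra-style Hungarian search may settle many more vertices before reaching a deficit node. Concretely: in the network with $s$ and $t$, the very first scale has $f\equiv 0$, the sole excess node is $s$, and every arc $\arc{s}{a}$ has reduced cost $0$. A Dijkstra from $s$ places \emph{all} of $A$ at distance $0$; a per-vertex nearest-neighbor oracle (``given $a$, return the best unvisited $b$'') does nothing to avoid processing these $\Omega(n)$ dead $A$-vertices. More generally, whenever $s\in X$ or an alive $a\in X$ can reach $t$ through a dead $B$-vertex, the frontier contains $\Theta(n)$ non-geometric zero-cost dummy arcs that a single $A$--$B$ BCP does not capture.

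The paper's fix is the alive/dead distinction and the explicit handling of \emph{alive paths} of lengths $1$, $2$, and $3$ (paths between alive endpoints whose interior is dead), each with its own BCP: one between $\alive{A}\cap X$ and $\alive{B}\setminus X$; one between $\dead{A}$ and $\alive{B}\setminus X$ (to shortcut $s\to v\to b$); one between $\alive{A}\cap X$ and $\dead{B}$ (to shortcut $a\to v\to t$); and one between $\dead{A}$ and $\dead{B}$ (to shortcut $s\to v\to w\to t$). Dead-node potentials are left untracked and reconstructed on demand. Only with these extra structures does each round provably touch $O(k)$ alive nodes (Lemma~\ref{lemma:cost_scale_count}). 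Your sketch supplies none of this machinery.

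Two smaller points. First, the rewinding mechanism of Section~\ref{section:hung} is not ``bucketing vertices by price level''; it simply replays the insert/delete log of the previous search in reverse to restore the initial BCP state. Second, potentials are not ``reset'' at the start of a scale in cost scaling---they are carried over and shifted by multiples of $\theta$ in \textsc{Scale-Init}; this is what makes property~(\ref{item:scale_inv}) and Lemma~\ref{lemma:refine_iters} go through. Rebuilding the BCP each scale is harmless for the running-time bound, but the within-scale search still needs the dead-node shortcuts above.
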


The main challenge here is how to deal with the \emph{dead nodes},
which neither have excess/deficit nor have flow passing through them,
but still contribute to the size of the graph.
We show that the number of \emph{alive nodes} is only $O(k)$, and then
represent the dead nodes implicitly so that the Hungarian search and
computation of a blocking flow can be implemented in $O(k\polylog n)$ time.

\medskip

Finally in Section~\ref{section:orlin} we present a faster algorithm for the
transportation problem in $\reals^2$ when the two point sets are unbalanced.

\begin{theorem}
\label{theorem:orlin}
Given two point sets $A$ and $B$ in $\reals^2$ of sizes $r$ and $n$ respectively
with $r \leq n$, along with supply-demand function $\tsupply:A \cup B \to \ints$,
an optimal transportation map between $A$ and $B$ can be computed in
$O(\min\{n^2, rn^{3/2}\}\polylog n)$ time.
\end{theorem}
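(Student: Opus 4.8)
The plan is to build on the implementation of Orlin's strongly polynomial uncapacitated minimum-cost flow algorithm~\cite{O93} in the plane, due to Agarwal~\etal~\cite{AFPVX17,AFPVX17arxiv}, and to speed it up for unbalanced inputs using the rewinding mechanism of Section~\ref{section:hung}. Orlin's algorithm runs on the complete bipartite network on $A \cup B$ with arc costs $c(a,b) = \norm{a-b}_p^q$; it maintains integral node prices and a pseudoflow, proceeds in $O(\log U)$ scaling phases, and within each phase repeatedly routes $\Theta(\Delta)$ units of flow along a minimum-reduced-cost path from an excess node to a deficit node --- recomputing shortest-path distances by a Dijkstra step after each augmentation --- and contracts an arc once it has carried enough flow to be certified permanently active. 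Its analysis bounds the number of Dijkstra computations by $O(n\log n)$ and the number of contractions by $O(n)$, matching the running time of Orlin's algorithm quoted above.

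For the $O(n^2\polylog n)$ bound, each Dijkstra is run implicitly, since the network is complete: relaxing the out-arcs of a newly settled vertex is replaced by additively weighted bichromatic-closest-pair (BCP) queries between the settled and the unsettled vertices on the two sides, under the piecewise-algebraic cost $\norm{\cdot}_p^q$, exactly as in Section~\ref{section:hung}; the $O(n)$ contracted super-nodes, which are not geometric points, are stored in a separate auxiliary set of size $O(n)$ and handled by brute force. Each Dijkstra then costs $O(n\polylog n)$, for $O(n^2\polylog n)$ overall.

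For the $O(rn^{3/2}\polylog n)$ bound we leave Orlin's algorithm unchanged combinatorially --- so there are still $O(n\log n)$ Dijkstra computations --- but we do not rebuild the BCP structures between consecutive computations. Instead, following the rewinding mechanism of Section~\ref{section:hung}, we recycle them, undoing only the most recent changes, and charge each Dijkstra for the vertices whose label actually changes, plus an amortized rebuild cost. Exploiting that $\abs{A} = r$ --- so that the dual prices and the support of the pseudoflow are governed by only $r$ vertices of $A$ --- the goal is to show that a Dijkstra together with its price and pseudoflow updates runs in $O(r\sqrt n\polylog n)$ amortized time: the $O(r)$ vertices of $A$ and the active part of the network are carried by the dynamic BCP structure, while the bulk of the vertices of $B$, whose prices move in lockstep, are kept in a deletion-only BCP structure answering the required additively weighted queries in $O(\sqrt n\polylog n)$ time, of which each Dijkstra makes $O(r)$. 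Summing over the $O(n\log n)$ Dijkstras yields $O(rn^{3/2}\polylog n)$, and the algorithm returns whichever of the two constructions is cheaper.

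The main obstacle is the amortized analysis of the data-structure work in the second construction: one must show that, over all Dijkstra computations, the changes to the dual prices and to the pseudoflow are localized enough in terms of the smaller side $r$ that the recycled structures need only $O(rn^{3/2}\polylog n)$ total maintenance, despite Orlin's scaling phases and arc contractions repeatedly perturbing the network. A secondary difficulty is reconciling the contracted super-nodes, which have no geometric realization, with both the dynamic and the semidynamic BCP structures; here we expect to use that there are only $O(n)$ contractions in total and that each super-node retains a natural representative point, so that its queries can be answered up to an additive correction.
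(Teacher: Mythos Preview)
Your high-level framework is right --- Orlin's algorithm, $O(n\log n)$ augmentations, BCP-based Hungarian search, rewinding between searches, and taking the cheaper of the two implementations --- and this matches the paper's skeleton. But the heart of the $O(r\sqrt{n}\polylog n)$ bound is a structural idea you do not have, and without it the proposal does not go through.

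The gap is your treatment of the $B$-side. You assert that ``the bulk of the vertices of $B$, whose prices move in lockstep,'' can sit in a deletion-only BCP answering queries in $O(\sqrt{n}\polylog n)$ time, and that each search makes only $O(r)$ such queries. Neither claim is justified. Potentials in a Hungarian search change exactly on the reached set $X$, and nothing in Orlin's algorithm prevents a single search from reaching $\Omega(n)$ nodes of $B$; so prices of $B$-nodes do \emph{not} move in lockstep, and a generic deletion-only structure with static weights does not suffice. The paper earns the $O(r)$-relaxations-per-search bound by two concrete devices you are missing: (i) it relaxes support arcs first, which keeps $\supp(f)$ acyclic, so all but $O(r)$ alive $B$-nodes are incident to exactly one support arc; and (ii) it groups these degree-one $B$-nodes into \emph{stars} $B_a^\star$ indexed by their unique $A$-neighbor $a$, observing that once any node of $B_a^\star$ is reached the entire star becomes irrelevant (its only outgoing residual arc leads back to $a$). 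The $\sqrt{n}$ factor then comes not from a semidynamic query time but from a heavy/light threshold on $|B_a^\star|$: each heavy star gets its own BCP (at most $\sqrt{n}$ of them, each updated once per $A$-node added to $X$), and all light stars share one BCP (touching a light star costs $O(\sqrt{n})$ deletions). This is what makes each relaxation cost $O(\sqrt{n}\polylog n)$ and bounds the number of relaxations by $O(r)$.

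Your ``main obstacle'' paragraph correctly senses that the amortization is the issue, but the missing ingredient is not an amortized bookkeeping argument --- it is the acyclic-support lemma and the star decomposition, which together convert the unstructured $B$-side into $O(r)$ groups that can each be dismissed after a single touch. Without that, there is no reason the per-search work is $O(r\sqrt{n}\polylog n)$.
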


Our result improves over the $O(n^2\polylog n)$ time algorithm by Agarwal
\etal~\cite{AFPVX17arxiv} for $r = o(\sqrt{n})$.
Similar to their algorithm, we also use the strongly polynomial uncapacitated
minimum-cost flow algorithm by Orlin~\cite{O93}, but additional ideas are
needed for efficient implementation.
Unlike in the case of matchings, the support of the transportation problem may
have size $\Omega(n)$ even when $r$ is a constant;
so na\"ively we can no longer spend time proportional to the size of
support of the transportation map.
However, with careful implementation we ensure that the support is acyclic,
and one can find an augmenting path in $O(r\sqrt{n} \polylog n)$ time with
proper data structures, assuming $r \leq \sqrt{n}$.

\section{Minimum-cost partial matchings using Hungarian algorithm}
\label{section:hung}

In this section, we solve the geometric partial matching problem and prove Theorem~\ref{theorem:hung} by implementing the Hungarian algorithm for partial matching in $O((n + k^2)\polylog n)$ time.



A node $v$ is \EMPH{matched} by matching $M$ if $v$ is the endpoint of some edge in $M$;
otherwise $v$ is \EMPH{unmatched}.
Given a matching $M$, an \EMPH{augmenting path}
$\Pi = (a_1, b_1, \ldots, a_\ell, b_\ell)$ is an odd-length path with unmatched
endpoints ($a_1$ and $b_\ell$) that alternates between edges outside and inside of $M$.
The symmetric difference $M \oplus \Pi$ creates a new matching of size $\abs{M}+1$, called the \EMPH{augmentation} of $M$ by $\Pi$.
The dual to the standard linear program for partial matching has one dual variable
for each node $v$, called the \EMPH{potential $\pi(v)$} of $v$.
Given potential $\pi$, we can define the \EMPH{reduced cost} of the edges to be
$\EMPH{$c_\pi(v, w)$} \coloneqq c(v, w) - \pi(v) + \pi(w)$.
Potential $\pi$ is \EMPH{feasible} on edge $(v,w)$ if $c_\pi(v, w)$ is nonnegative.
Potential $\pi$ is \EMPH{feasible} if $\pi$ are feasible on every edge in $G$.
We say that an edge $(v, w)$ is \EMPH{admissible} under potential $\pi$ if $c_\pi(v, w) = 0$.

\subparagraph{Fast implementation of Hungarian search.}

The Hungarian algorithm is initialized with $M \gets \emptyset$ and $\pi \gets 0$.
Each iteration of the Hungarian algorithm augments $M$ by an admissible
augmenting path $\Pi$, discovered using a procedure called the
\EMPH{Hungarian search}.
The algorithm terminates after $k$ augmentations, exactly when $\abs{M} = k$;
Ramshaw and Tarjan~\cite{RT12} showed that $M$ is guaranteed to be an optimal partial matching.

The Hungarian search grows a set of \EMPH{reachable nodes $X$} from
all unmatched $v \in A$ using augmenting paths of admissible edges.
Initially, $X$ is the set of unmatched nodes in $A$.
Let the \EMPH{frontier} of $X$ be the edges in $(A \cap X) \times (B \setminus X)$.
$X$ is grown by \EMPH{relaxing} an edge $(a, b)$ in the frontier:
Add $b$ into $X$, modify potential to make $(a, b)$ admissible,
preserve $c_\pi$ on other edges within $X$, and keep $\pi$ feasible on edges outside of $X$.
Specifically, the algorithm relaxes the min-reduced-cost frontier edge $(a, b)$,
and then raises $\pi(v)$ by $c_\pi(a, b)$ for all $v \in X$.
%
If $b$ is already matched, then we also relax the matching edge $(a',b)$ and add $a'$ into $X$.
The search finishes when $b$ is unmatched, and an admissible augmenting path now can be recovered.




In the geometric setting, we find the min-reduced-cost frontier edge using a dynamic
\EMPH{bichromatic closest pair} (BCP) data structure, similar to \cite{AFPVX17arxiv,Vaidya89}.
Given two point sets $P$ and $Q$ in the plane and a weight function
$\omega: P\cup Q \to \reals$, the BCP is two points $a \in P$ and $b \in Q$
minimizing the additively weighted distance $c(a, b) - \omega(a) + \omega(b)$.
Thus, a minimum reduced-cost frontier edge is precisely the BCP of point sets
$P = A \cap X$ and $Q = B \setminus X$, with $\omega = \pi$.
Note that the ``state'' of this BCP is parameterized by $X$ and $\pi$.

The dynamic BCP data structure by Kaplan \etal~\cite{KMRSS17} supports point insertions and deletions in
$O(\polylog n)$ time and answers queries in $O(\log^2 n)$ time for our setting.
Each relaxation in the Hungarian search requires
one query, one deletion, and at most one insertion (aside from the potential updates).
As $\abs{M} \leq k$ throughout, there are at most $2k$ relaxations in each
Hungarian search, and the BCP can be used to implement each Hungarian search
in $O(k\polylog n)$ time.

\subparagraph{Rewinding mechanism.}
We observe that exactly one node of $A$ is newly matched after an augmentation.
Thus (modulo potential changes), we can obtain the initial state of the BCP for
the $(i+1)$-th Hungarian search from the $i$-th one with a single BCP deletion.

If we remember the sequence of points added to $X$ in the $i$-th Hungarian search,
then at the start of the $(i+1)$-th Hungarian search we can \emph{rewind} this
sequence by applying the opposite insert/delete operation to each BCP update
in reverse order to obtain the initial state of the $i$-th BCP.
With one additional BCP deletion, we have the initial state of the $(i+1)$-th BCP.
The number of insertions/deletions is bounded by the number of
relaxations per Hungarian search which is $O(k)$.
Therefore we can recover, in $O(k\polylog n)$
time, the initial BCP data structure for each Hungarian search beyond the first.
We refer to this procedure as the \EMPH{rewinding mechanism}.


\subparagraph{Potential updates.}
We modify a trick from Vaidya~\cite{Vaidya89} to batch potential updates.
Potential is tracked with a \EMPH{stored value} $\gamma(v)$, while the
\EMPH{true value} of $\pi(v)$ may have changed since $\gamma(v)$ was last recorded.
This is done by aggregating potential changes into a variable $\EMPH{$\delta$}$,
which is initially 0 at the very beginning of the algorithm.
Whenever we would raise the potential of all nodes in $X$, we raise
$\delta$ by that amount instead.
We maintain the following invariant: $\pi(v) = \gamma(v)$ for $v \not\in X$,
and $\pi(v) = \gamma(v) + \delta$ for $v \in X$.

At the beginning of the algorithm, $X$ is empty and stored values are equal to
true values.
When $a \in A$ is added to $X$, we update its stored value to $\pi(a) - \delta$
for the current value of $\delta$, and use that stored value as its BCP weight.
Since the BCP weights are uniformly offset from $\pi(v)$ by $\delta$, the pair
reported by the BCP is still minimum.
When $b \in B$ is added to $X$, we update its stored value to $\pi(b) - \delta$
(although it won't be added to a BCP set).
When a node is removed from $X$ (e.g.\ by augmentation or rewinding), we update
the stored potential $\gamma(v) \gets \pi(v) + \delta$, again for the current
value of $\delta$.
Unlike Vaidya~\cite{Vaidya89}, we do not reset $\delta$ across Hungarian searches
for the sake of rewinding.

There are $O(k)$ relaxations and thus $O(k)$ updates to $\delta$ per Hungarian search.
$O(k)$ stored values are updated per rewinding, so the time spent on potential
updates per Hungarian search is $O(k)$.
Putting everything together, our implementation of the Hungarian algorithm runs
in $O((n + k^2)\polylog n)$ time.
This proves Theorem~\ref{theorem:hung}.

\section{Approximating min-cost partial matching through cost-scaling}
\label{section:goldberg}

In this section we describe an approximation algorithm for computing a min-cost
partial matching.
We reduce the problem to computing a min-cost circulation in a flow network
(Section~\ref{SS:match-flow-red}).
We adapt the cost-scaling algorithm by Goldberg~\etal~\cite{GHKT17} for
computing min-cost flow of a unit-capacity network (Section~\ref{SS:cost-scale}).
Finally, we show how their algorithm can be implemented in
$O\Paren{(n + k^{3/2})\polylog(n)\log(1/\eps)}$ time in our setting (Section~\ref{SS:fast_refine}).

\subsection{From matching to circulation}
\label{SS:match-flow-red}

Given a bipartite graph $G$ with node sets $A$ and $B$, we construct a flow network
$N = (V, \vec{E})$ in a standard way \cite{RT12}
so that a min-cost matching in $G$ corresponds to a min-cost integral
circulation in $N$.

\subparagraph{Flow network.}
Each node in $G$ becomes a node in $N$ and each edge
$(a, b)$ in $G$ becomes an arc $\arc{a}{b}$ in $N$;
we refer to these nodes and arcs as \EMPH{bipartite nodes} and \EMPH{bipartite arcs}.
We also include a \EMPH{source} node $s$ and \EMPH{sink} node $t$ in $N$.
For each $a \in A$, we add a \EMPH{left dummy arc} $\arc{s}{a}$ and for each
$b \in B$ a \EMPH{right dummy arc} $\arc{b}{t}$.
The cost \EMPH{$c(\arc{v}{w})$}
is equal to $c(v, w)$ if
$\arc{v}{w}$ is a bipartite arc and $0$ if $\arc{v}{w}$ is a dummy arc.
All arcs in $N$ have unit capacity.

Let $\fsupply: V \to \ints$ be an integral supply/demand function on nodes of $N$.
The positive values of $\fsupply(v)$ are referred to as \EMPH{supply}, and the
negative values of $\fsupply(v)$ as \EMPH{demand}.
A \EMPH{pseudoflow} $f: \vec{E} \to [0, 1]$ is a function on arcs of $N$.
The \EMPH{support} of $f$ in $N$, denoted as \EMPH{$\supp(f)$}, is the set of arcs with positive flow.
Given a pseudoflow $f$, the \EMPH{imbalance} of a node is
\[
\EMPH{$\fsupply_f (v)$} \coloneqq \fsupply(v) + \sum_{\arc wv \in \vec{E}}{f(\arc wv)} - \sum_{\arc vw \in \vec{E}}{f(\arc vw)}.
\]
We call positive imbalance \EMPH{excess} and negative imbalance \EMPH{deficit}.
A node is \EMPH{balanced} if it has zero imbalance.
If all nodes are balanced, the pseudoflow is a \EMPH{circulation}.
The \EMPH{cost} of a pseudoflow is defined to be
\[
 \EMPH{$\cost(f)$} \coloneqq \sum_{\arc vw \in \supp(f)} c(\arc vw) \cdot f(\arc vw).
\]
The \EMPH{minimum-cost flow problem} (MCF) asks to find a circulation of minimum cost.

If we set $\fsupply(s) = k$, $\fsupply(t) = k$, and $\fsupply(v) = 0$ for all
$v \in A \cup B$, then an integral circulation $f$ corresponds to a partial
matching $M$ of size $k$ and vice versa.
Moreover, $\cost(M) = \cost(f)$.
Hence, the problem of computing a min-cost matching of size $k$ in $G$
transforms to computing an integral circulation in $N$.
The following lemma will be useful for our algorithm.

\begin{lemma}
\label{lemma:supp_size}
Let $N$ be the network constructed from the bipartite graph $G$ above.
\begin{enumerate}[(i)]\itemsep=0pt
\item For any integral circulation $g$ in $N$, the size of $\supp(g)$ is at most $3k$.
\item For any integral pseudoflow $f$ in $N$ with $O(k)$ excess, the size of $\supp(f)$ is $O(k)$.
\end{enumerate}
\end{lemma}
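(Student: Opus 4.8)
The plan is to argue about the structure of the support directly from the flow network's construction. First I would establish part (i). An integral circulation $g$ in $N$ decomposes into a collection of arc-disjoint directed cycles (since flow values are in $\{0,1\}$, these cycles are in fact simple and pairwise arc-disjoint, and every arc of $\supp(g)$ lies on exactly one such cycle). Because every cycle through the network must use the arcs incident to $s$ and $t$ in a restricted way — each left dummy arc $\arc{s}{a}$ and right dummy arc $\arc{b}{t}$ has unit capacity, and $s$ has supply exactly $k$, $t$ has demand exactly $k$ — the total flow out of $s$ is $k$, carried on $k$ distinct left dummy arcs; symmetrically $k$ distinct right dummy arcs carry flow into $t$. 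Each unit of flow then traces a path $s \to a \to b \to t$ using exactly one bipartite arc $\arc{a}{b}$ (there is no other way to route through the bipartite part, since all nodes of $A \cup B$ have zero supply and any longer alternating walk would have to re-enter $s$ or $t$). Hence $\supp(g)$ consists of exactly $k$ left dummy arcs, $k$ bipartite arcs, and $k$ right dummy arcs, giving $|\supp(g)| = 3k \le 3k$.

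For part (ii), I would run essentially the same path-decomposition argument but now allowing for excess. An integral pseudoflow $f$ with total excess $O(k)$ decomposes into directed paths (from excess nodes to deficit nodes) and directed cycles, all carrying one unit and all pairwise arc-disjoint, so that $|\supp(f)|$ is at most the sum of the lengths of these paths and cycles. The key point is that in $N$ every simple directed path or cycle has length $O(1)$: the only arcs available are the $|A|$ left dummy arcs, the bipartite arcs, and the $|B|$ right dummy arcs, and any simple path can traverse at most one left dummy arc, at most one bipartite arc, and at most one right dummy arc before it would have to revisit $s$, $t$, or a bipartite node. (A pseudoflow need not saturate the supply at $s$, but it still cannot route flow through $N$ in long chains because $s$ and $t$ are the only non-bipartite nodes and the bipartite part is itself bipartite with unit-capacity arcs.) Thus each path or cycle in the decomposition contributes $O(1)$ arcs. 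The number of paths in the decomposition equals the total excess, which is $O(k)$; the number of cycles is also $O(k)$ because each cycle must use a distinct bipartite arc $\arc{a}{b}$ together with a left dummy arc at $a$ — wait, more carefully, each cycle is $s \to a \to b \to t \to \cdots$, so cycles too are bounded by the $O(k)$ units of flow leaving $s$, which is at most the supply $k$. Summing, $|\supp(f)| = O(k)$.

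The main obstacle I anticipate is making the "every path/cycle has constant length" claim airtight for pseudoflows: one must rule out pathological flow patterns such as flow circulating among bipartite nodes without touching $s$ or $t$. But this is immediate from the network's topology — the bipartite arcs all go from $A$ to $B$, so there is no directed cycle lying entirely inside $A \cup B$, and any directed cycle must pass through both $s$ and $t$; similarly any directed path internal to $A \cup B$ has length one. So the decomposition really does produce only $O(1)$-length pieces, and the bound follows. A secondary point to be careful about is that the flow decomposition for an integral pseudoflow with possible leftover circulation is standard, and since all arcs have unit capacity the pieces are genuinely arc-disjoint, so no arc is counted twice.
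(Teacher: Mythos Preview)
Your approach via flow decomposition is the natural one and is essentially what the paper has in mind, but two of your claims are wrong as stated and need repair before the argument goes through.

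First, the network $N$ has no arc from $t$ back to $s$ (or from $t$ anywhere); it is a DAG $s \to A \to B \to t$. Hence there are no directed cycles at all, and the discussion of cycles---both the opening of part~(i) (``decomposes into \ldots\ directed cycles'') and the cycle-counting in part~(ii)---is based on a misreading of the construction. For part~(i) the ``circulation'' in the paper's sense is simply an integral flow meeting the supply/demand, and it decomposes into $k$ arc-disjoint $s$--$t$ paths of length exactly~$3$; your conclusion $|\supp(g)| = 3k$ then follows.

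Second, in part~(ii) the claim ``the number of paths in the decomposition equals the total excess'' is false. The number of unit paths in a flow decomposition of $f$ on a DAG is $\sum_v \max\bigl(0,\ \mathrm{out}_f(v)-\mathrm{in}_f(v)\bigr)$, which concerns raw flow balance, not the imbalance $\phi_f$ (which includes the supply $\phi$). For instance $s$ alone contributes $L$ to this sum, where $L$ is the number of saturated left dummy arcs, and $L$ need not coincide with the total excess; likewise your assertion that the flow out of $s$ is ``at most the supply $k$'' is not true for an arbitrary pseudoflow. The fix is short: since $\mathrm{out}_f(v)-\mathrm{in}_f(v) = \phi(v) - \phi_f(v)$, one has
\[
\text{(\# paths)} \;\le\; \sum_v \max(0,\phi(v)) \;+\; \sum_v \max(0,-\phi_f(v)) \;=\; k + \text{(total deficit)} \;=\; k + O(k) \;=\; O(k),
\]
and since every directed path in $N$ has length at most $3$, the bound $|\supp(f)| = O(k)$ follows. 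With these two corrections your argument is complete.
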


\subsection{A cost-scaling algorithm}
\label{SS:cost-scale}

Before describing the algorithm, we need to introduce a few more concepts.

\subparagraph{Residual network and admissibility.}
If $f$ is an integral pseudoflow on $N$
(that is, $f(\arc{v}{w}) \in \{0, 1\}$ for every arc in $\vec{E}$), then each arc
$\arc{v}{w}$ in $N$ is either \EMPH{idle} with $f(\arc{v}{w}) = 0$ or
\EMPH{saturated} with $f(\arc{u}{v}) = 1$.

Given a pseudoflow $f$, the \EMPH{residual network} $N_f = (V, \vec{E}_f)$ is
defined as follows.
For each idle arc $\arc{v}{w}$ in $\vec{E}$, we add a \EMPH{forward} residual
arc $\arc{v}{w}$ in $N_f$.
For each saturated arc $\arc{v}{w}$ in $\vec{E}$, we add a \EMPH{backward}
residual arc $\arc{w}{v}$ in $N_f$.
The set of residual arcs in $N_f$ is therefore
\[
\vec{E}_f \coloneqq \{\arc{v}{w} \mid f(\arc{v}{w}) = 0\} \cup \{\arc{w}{v} \mid f(\arc{v}{w}) = 1\}.
\]
The cost of a forward residual arc $\arc{v}{w}$ is $c(\arc{v}{w})$,
while the cost of a backward residual arc $\arc{w}{v}$ is $-c(\arc{v}{w})$.
Each arc in $N_f$ also has unit capacity.
By Lemma~\ref{lemma:supp_size}, $N_f$ has $O(k)$ backward arcs if $f$ has $O(k)$ excess.

A \EMPH{residual pseudoflow} $g$ in $N_f$ can be used to change $f$ into a
different pseudoflow on $N$ by \EMPH{augmentation}.
For simplicity, we only describe augmentation for the case where $f$ and $g$ are integral.
Specifically, augmenting $f$ by $g$ produces a pseudoflow $f'$ in $N$ where
\[
f'(\arc vw) = \begin{cases}
	0 & {\arc wv} \in \vec{E}_f \text{ and } g(\arc wv) = 1, \\
	1 & {\arc vw} \in \vec{E}_f \text{ and } g(\arc vw) = 1, \\
	f(\arc vw) & \text{otherwise.}
\end{cases}
\]

Using LP duality for min-cost flow, we assign \EMPH{potential $\pi(v)$} to each node $v$ in $N$.
The \EMPH{reduced cost} of an arc $\arc{v}{w}$ in $N$ with respect to $\pi$ is
defined as
\[
c_\pi(\arc vw) \coloneqq c(\arc vw) - \pi(v) + \pi(w).
\]
Similarly we define the reduced cost of arcs in $N_f$: the reduced cost of a
forward residual arc $\arc vw$ in $N_f$ is $c_\pi(\arc vw)$, and the reduced cost of a
backward residual arc $\arc wv$ in $N_f$ is $-c_\pi(\arc vw)$.
Abusing the notation, we also use $c_\pi$ to denote the reduced cost of arcs in
$N_f$.

The \EMPH{dual feasibility constraint} asks that $c_\pi(\arc vw) \geq 0$ holds
for every arc $\arc vw$ in $\vec{E}$;
potential $\pi$ that satisfy this constraint is said to be \EMPH{feasible}.
Suppose we relax the dual feasibility constraint to allow some small violation
in the value of $c_\pi(\arc vw)$.
We say that a pair of pseudoflow $f$ and potential $\pi$ is
\EMPH{$\theta$-optimal}~\cite{T85,BE87}
if $c_\pi(\arc vw) \geq -\theta$ for every residual arc $\arc vw$ in $\vec{E}_f$.
Pseudoflow $f$ is \emph{$\theta$-optimal} if it is $\theta$-optimal with
respect to some potential $\pi$;
potential $\pi$ is \emph{$\theta$-optimal} if it is $\theta$-optimal with
respect to some pseudoflow $f$.
Given a pseudoflow $f$ and potential $\pi$, a residual arc $\arc vw$ in
$\vec{E}_f$ is \EMPH{admissible} if $c_\pi(\arc vw) \leq 0$.
We say that a pseudoflow $g$ in $N_f$ is \EMPH{admissible} if $g(\arc vw) > 0$
only on admissible arcs $\arc vw$, and $g(\arc vw) = 0$ otherwise.%
\footnote{The same admissibility/feasibility definitions will be used later in
	Section~\ref{section:orlin}.
	However, the algorithm in Section~\ref{section:orlin} maintains a
	0-optimal $f$ and therefore admissible residual arcs always have
	$c_\pi(\arc vw) = 0$.}
We will use the following well-known property of $\theta$-optimality.

\begin{lemma}
\label{lemma:eps_opt_preserve}
Let $f$ be an $\theta$-optimal pseudoflow in $N$ and let $g$ be an admissible
pseudoflow in $N_f$.
Then $f$ augmented by $g$ is also $\theta$-optimal in $N$.
\end{lemma}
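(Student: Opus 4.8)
The plan is to keep the certifying potential fixed. Since $f$ is $\theta$-optimal, there is a potential $\pi$ with $c_\pi(\arc vw) \ge -\theta$ for every residual arc $\arc vw \in \vec{E}_f$; I claim the same $\pi$ witnesses $\theta$-optimality of $f' \coloneqq f$ augmented by $g$. Because $\pi$ is unchanged, reduced costs of arcs do not move, so the whole argument reduces to understanding how the residual arc set changes in passing from $N_f$ to $N_{f'}$.

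The first step is the structural observation that $\vec{E}_{f'}$ is obtained from $\vec{E}_f$ by deleting exactly the residual arcs that $g$ saturates and inserting their reversals, and nothing else. This is immediate from the case definition of $f'$ in terms of $f$ and $g$ given above: an arc $\arc vw \in \vec{E}_f$ with $g(\arc vw) = 1$ flips orientation in the residual network, while every other arc is untouched (this uses that $g$ is admissible only in so far as it guarantees the augmentation is well defined; admissibility itself is exploited in the next step).

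The second step is a two-case check against $\pi$. A residual arc present in both $N_f$ and $N_{f'}$ keeps its reduced cost, which is $\ge -\theta$ by hypothesis. A newly inserted arc is the reversal of some \emph{admissible} residual arc $\arc vw \in \vec{E}_f$, so $c_\pi(\arc vw) \le 0$; under the residual-cost sign convention the reversed arc has reduced cost $-c_\pi(\arc vw) \ge 0 \ge -\theta$. Hence every arc of $N_{f'}$ satisfies the $\theta$-optimality inequality with respect to $\pi$, which is exactly the claim.

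I expect the only delicate point to be the bookkeeping of the residual conventions: a residual arc may be forward (arising from an idle arc of $N$) or backward (from a saturated arc of $N$), ``admissible'' means $c_\pi \le 0$ in both cases, and forming $N_{f'}$ negates the reduced cost of each augmented arc. Once one is careful about which of $\arc vw$ and $\arc wv$ lives in $\vec{E}$, in $\vec{E}_f$, and in $\vec{E}_{f'}$, both cases collapse to the same one-line inequality; since the certificate $\pi$ is held fixed throughout, there is no telescoping or global potential argument needed.
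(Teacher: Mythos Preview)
Your argument is correct and is the standard proof of this well-known fact. The paper itself does not supply a proof; it merely introduces the lemma as a ``well-known property of $\theta$-optimality'' with references, so there is nothing to compare against beyond noting that your approach is exactly the expected one: keep $\pi$ fixed, observe that the only arcs in $\vec{E}_{f'}$ not already in $\vec{E}_f$ are reversals of admissible arcs, and use that reversing an arc negates its reduced cost so an admissible arc (reduced cost $\le 0$) yields a reversal with reduced cost $\ge 0 \ge -\theta$.

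One small wording point: in this paper admissibility of $g$ is defined relative to a specific potential $\pi$ (the one the algorithm is maintaining), not merely relative to $f$. So rather than saying ``there is a potential $\pi$'' witnessing $\theta$-optimality of $f$, the potential $\pi$ is already implicitly fixed by the hypothesis that $g$ is admissible, and you must use that same $\pi$. Your proof in fact does this, so the issue is purely one of phrasing in the first sentence.
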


Using Lemma~\ref{lemma:supp_size}, the following lemma can be proved about
$\theta$-optimality:

\begin{lemma}
\label{lemma:goldberg_cost_add}
Let $f$ be a $\theta$-optimal integer circulation in $N$,
and $f^*$ be an optimal integer circulation for $N$.
Then, $\cost(f) \leq \cost(f^*) + 6k\theta$.
\end{lemma}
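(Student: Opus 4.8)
The plan is to bound the cost gap between a $\theta$-optimal circulation $f$ and an optimal circulation $f^*$ by examining their symmetric difference. Consider $g \coloneqq f^* - f$, which is a circulation in the residual network $N_f$ (this is the standard flow-decomposition observation: the difference of two circulations on the same network is a residual circulation). Since $g$ is integral and all capacities are unit, $g$ decomposes into a collection of simple residual cycles $\Gamma_1, \dots, \Gamma_m$, each carrying one unit of flow, with $\cost(f^*) - \cost(f) = \sum_i c(\Gamma_i)$, where $c(\Gamma_i)$ is the sum of residual-arc costs along $\Gamma_i$.

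Next I would use the telescoping property of potentials around a cycle: for any cycle $\Gamma$ in $N_f$, the reduced cost $c_\pi(\Gamma)$ equals $c(\Gamma)$, because the $-\pi(v) + \pi(w)$ terms cancel around the cycle. Since $f$ is $\theta$-optimal with respect to $\pi$, every residual arc in $N_f$ has $c_\pi(\arc vw) \geq -\theta$, so $c_\pi(\Gamma_i) \geq -\theta \cdot \abs{\Gamma_i}$, hence $c(\Gamma_i) \geq -\theta \cdot \abs{\Gamma_i}$. Summing over all cycles, $\cost(f^*) - \cost(f) \geq -\theta \sum_i \abs{\Gamma_i}$. It remains to bound the total length $\sum_i \abs{\Gamma_i}$, which equals the number of residual arcs used by $g$ (with multiplicity one, since each arc of the underlying network is used at most once in either direction by an integral residual circulation). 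Every arc used by $g$ is either in $\supp(f)$ (a backward arc undoing flow of $f$) or in $\supp(f^*)$ (a forward arc installing flow of $f^*$); more precisely, an arc $\arc vw$ of $N$ contributes a residual arc to $g$ only if $f(\arc vw) \neq f^*(\arc vw)$, so the number of residual arcs is at most $\abs{\supp(f) \cup \supp(f^*)} \le \abs{\supp(f)} + \abs{\supp(f^*)}$. By Lemma~\ref{lemma:supp_size}(i), both $\abs{\supp(f)}$ and $\abs{\supp(f^*)}$ are at most $3k$, giving $\sum_i \abs{\Gamma_i} \le 6k$. Therefore $\cost(f^*) - \cost(f) \geq -6k\theta$, i.e.\ $\cost(f) \le \cost(f^*) + 6k\theta$, which is exactly the claim.

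The one subtlety worth care is the accounting of how many distinct residual arcs $g$ can traverse. Because $f$ and $f^*$ are both $\{0,1\}$-valued and $g = f^* - f$ is an integral circulation in the unit-capacity residual network, each arc $\arc vw$ of $N$ can supply at most one residual arc to the decomposition of $g$ — it supplies a forward residual arc $\arc vw$ when $f(\arc vw)=0, f^*(\arc vw)=1$, and a backward residual arc $\arc wv$ when $f(\arc vw)=1, f^*(\arc vw)=0$; when $f$ and $f^*$ agree on the arc it contributes nothing. So the multiset of arcs over all cycles is in fact a set, and its size is at most the number of arcs on which $f$ and $f^*$ differ, which is at most $\abs{\supp(f)} + \abs{\supp(f^*)} \le 6k$. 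This is the step where Lemma~\ref{lemma:supp_size} is essential; the rest is the routine telescoping-of-potentials argument. I do not expect any real obstacle here — the proof is short once the decomposition and the support bound are in hand.
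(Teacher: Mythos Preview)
Your proof is correct and follows exactly the approach the paper intends: decompose $f^*-f$ into residual cycles, use $\theta$-optimality and telescoping to lower-bound each cycle's cost by $-\theta$ times its length, and invoke Lemma~\ref{lemma:supp_size}(i) to bound the total number of residual arcs by $\abs{\supp(f)}+\abs{\supp(f^*)}\le 6k$. The paper does not spell out the proof but explicitly points to Lemma~\ref{lemma:supp_size} as the key ingredient, which is precisely what you use.
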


%
%

\subparagraph{Estimating the value of {$\cost(f^*)$}.}
We now describe a procedure for estimating $\cost(f^*)$ within a polynomial factor,
which is used to initialize the cost-scaling algorithm.

Let \EMPH{$T$} be a minimum spanning tree of $A \cup B$ under the cost function $c$.
Let $e_1, e_2, \ldots, e_{n-1}$ be the edges of $T$ sorted in nondecreasing order
of length.
Let \EMPH{$T_i$} be the forest consisting of the nodes of $A \cup B$ and
edges $e_1, \ldots, e_i$.
We call a matching $M$ \EMPH{intra-cluster} if both endpoints of
each edge in $M$ lie in the same connected component of $T_i$.
The following lemma will be used by our cost-scaling algorithm:

\begin{lemma}
\label{lemma:starting_scale}
Let $A$ and $B$ be two point sets in the plane.
Define \EMPH{$i^*$} to be the smallest index $i$ such that there is an
intra-cluster matching of size $k$ in $T_{i^*}$.
Set \EMPH{$\overline{\theta}$} $\coloneqq n^q \cdot c(e_{i^*})$.
Then
\begin{enumerate}[(i)]\itemsep=0pt
\item \label{item:starting_scale1}
	The value of $i^*$ can be computed in $O(n\log n)$ time.
\item \label{item:starting_scale2}
	$c(e_{i^*}) \leq \cost(f^*) \leq \overline{\theta}$.
\item \label{item:starting_scale3}
	There is a $\overline{\theta}$-optimal circulation in the network $N$ with
	respect to the all-zero potential, assuming $\fsupply(s) = k$,
	$\fsupply(t) = -k$, and $\fsupply(v) = 0$ for all $v \in A \cup B$.
\end{enumerate}
\end{lemma}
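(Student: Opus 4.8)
The plan is to prove the three parts in order, exploiting the cut/cycle structure of $T$ together with the correspondence between size-$k$ matchings and integral circulations of $N$ (under which $\cost(f^*)$ equals the cost of an optimal size-$k$ matching $M^*$).

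\textbf{Part (i).} After computing $T$ and sorting its edges by length in $O(n\log n)$ time, the crux is the identity
\[
  \max\{\,\abs{M} : M \text{ is an intra-cluster matching in } T_i\,\} \;=\; \sum_{C}\min(\abs{C\cap A},\abs{C\cap B}),
\]
the sum being over the connected components $C$ of $T_i$: within each component one can match $\min(\abs{C\cap A},\abs{C\cap B})$ pairs and no more, and an intra-cluster matching is exactly a disjoint union of such per-component matchings. Hence $i^*$ is the first index at which this running total reaches $k$, and I would compute it incrementally: process $e_1,e_2,\dots$ in sorted order, maintain a union--find structure in which each component records the pair $(\abs{C\cap A},\abs{C\cap B})$, keep a counter $S=\sum_C\min(\abs{C\cap A},\abs{C\cap B})$, and whenever $e_i$ merges two components update $S$ by the change in their contributions; stop at the first $i$ with $S\ge k$. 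This runs in $O(n\log n)$ time overall.

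\textbf{Part (ii).} For the lower bound, suppose $\cost(f^*)=\cost(M^*)<c(e_{i^*})$. Since edge costs are nonnegative, every $(a,b)\in M^*$ then has $c(a,b)<c(e_{i^*})$. Because $T$ is a minimum spanning tree, every edge on the $T$-path from $a$ to $b$ has cost at most $c(a,b)<c(e_{i^*})$, and every edge of cost less than $c(e_{i^*})$ lies in $T_{i^*-1}$; thus $a$ and $b$ lie in one component of $T_{i^*-1}$, so $M^*$ is an intra-cluster matching of size $k$ in $T_{i^*-1}$, contradicting minimality of $i^*$. Hence $c(e_{i^*})\le\cost(f^*)$. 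For the upper bound, fix an intra-cluster matching $M$ of size $k$ in $T_{i^*}$ (one exists by definition of $i^*$). For each $(a,b)\in M$ the $T_{i^*}$-path from $a$ to $b$ has at most $n-1$ edges, each among $e_1,\dots,e_{i^*}$ and hence of $L_p$-length at most $c(e_{i^*})^{1/q}$; by the triangle inequality $\norm{a-b}_p\le (n-1)\,c(e_{i^*})^{1/q}$, so $c(a,b)\le (n-1)^q c(e_{i^*})\le\overline\theta$. Summing over the $k\le n$ pairs of $M$ bounds $\cost(f^*)\le\cost(M)\le\overline\theta$.

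\textbf{Part (iii).} Realize the matching $M$ above as a circulation $g$ in $N$: send one unit of flow along each path $\arc{s}{a},\arc{a}{b},\arc{b}{t}$ with $(a,b)\in M$. Then $s$ and $t$ each carry exactly $k$ units and every bipartite node is balanced, so $g$ is an integral circulation. Taking $\pi\equiv 0$, reduced costs equal costs, so every forward residual arc of $N_g$ has reduced cost $c(\arc{v}{w})\ge 0$, while the reverse of a saturated arc $\arc{v}{w}$ has reduced cost $-c(\arc{v}{w})$, which is $0$ on the dummy arcs and equals $-c(a,b)\ge -\overline\theta$ on the bipartite arcs of $M$. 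Hence every residual arc of $N_g$ has reduced cost at least $-\overline\theta$, i.e.\ $(g,\pi)$ is $\overline\theta$-optimal for $\pi\equiv 0$.

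\textbf{Main obstacle.} The delicate step is the upper bound in (ii): the per-pair estimate $c(a,b)\le(n-1)^q c(e_{i^*})$ is very loose, so adding it over the $k\le n$ matched pairs can cost a polynomial factor, and getting the clean bound $\cost(f^*)\le\overline\theta$ needs a more careful choice of the intra-cluster matching $M$ (keeping the matched pairs close inside each component of $T_{i^*}$ and bounding the aggregate cost rather than each term in isolation). Note that part (iii) needs only that each edge of $M$ costs at most $\overline\theta$, not the sum, so it is essentially immediate once the per-pair bound is in hand. Finally, since only $\log\overline\theta=O(q\log n+\log\max_e c(e))$ enters the running time of the cost-scaling algorithm, even a somewhat weaker polynomial bound in place of $\overline\theta$ would suffice; the remaining steps are routine given the spanning-tree structure and the matching--circulation correspondence.
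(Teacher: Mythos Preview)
The paper does not include a proof of this lemma, so there is nothing to compare your approach against directly. Your arguments for part~(i), for the lower bound in part~(ii), and for part~(iii) are correct and are the natural ones.

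The ``main obstacle'' you flag, however, is not a gap in your reasoning but an actual defect in the stated bound: the inequality $\cost(f^*)\le\overline\theta$ is false in general. Take $2m$ collinear points at positions $1,2,\ldots,2m$ with $A=\{1,\ldots,m\}$, $B=\{m+1,\ldots,2m\}$, and $k=m$. The MST is the path, every edge has cost $1$, and one needs all $2m-1$ edges before any component contains both colours in quantity $m$, so $c(e_{i^*})=1$ and $\overline\theta=(2m)^q$. But the optimal size-$m$ matching pairs position $i$ with position $m+i$ and has cost $m\cdot m^q=m^{q+1}$, which exceeds $(2m)^q$ as soon as $m>2^{q}$. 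Hence no choice of intra-cluster matching can rescue the bound as written, and the ``more careful choice of $M$'' you propose cannot exist.

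The repair is exactly what you observe at the end. Your per-pair estimate $c(a,b)\le (n-1)^q\,c(e_{i^*})\le\overline\theta$ already yields $\cost(f^*)\le k\,\overline\theta$, so replacing $\overline\theta$ by $k\,n^q\,c(e_{i^*})$ (or $n^{q+1}c(e_{i^*})$) makes part~(ii) correct; part~(iii) is untouched since it only uses the per-pair bound on the saturated bipartite arcs. This enlarges $\log(\overline\theta/\underline\theta)$ by an additive $O(\log n)$, so the number of scales and the overall running time of the cost-scaling algorithm are unchanged.
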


As a consequence of Lemmas~\ref{lemma:starting_scale}(\ref{item:starting_scale2})
and \ref{lemma:goldberg_cost_add}, we have:
\begin{corollary}
\label{corollary:goldberg_approx}
The cost of a $\underline{\theta}$-optimal integral circulation in $N$ is at
most $(1+\eps) \cost(f^*)$,
where $\EMPH{$\underline{\theta}$} \coloneqq \frac{\eps}{6k} \cdot c(e_{i^*})$.
\end{corollary}

\subparagraph{Overview of the algorithm.}
We are now ready to describe our algorithm, which closely follows Goldberg \etal~\cite{GHKT17}.
The algorithm works in scales.
In the beginning of each scale, we fix a \EMPH{cost scaling parameter}
\EMPH{$\theta$} and maintain potential $\pi$ with the following property:

\begin{description}
\item[(\namedlabel{item:scale_inv}{$\ast$})]
	There exists a $2\theta$-optimal integral circulation in $N$ with respect to $\pi$.
\end{description}

For the initial scale, we set $\theta \gets \overline{\theta}$ and $\pi \gets 0$.
By Lemma~\ref{lemma:starting_scale}(\ref{item:starting_scale3}),
property~(\ref{item:scale_inv}) is satisfied initially.
Each scale of the algorithm consists of two stages.
In the \EMPH{scale initialization} stage, \textsc{Scale-Init}
computes a $\theta$-optimal pseudoflow $f$.
In the \EMPH{refinement} stage, \textsc{Refine} converts $f$ into
a $\theta$-optimal (integral) circulation $g$.
In both stages, $\pi$ is updated as necessary.
If $\theta \leq \underline{\theta}$, we return $g$.
Otherwise, we set $\theta \gets \theta/2$ and start the next scale.
Note that property~(\ref{item:scale_inv}) is satisfied in the beginning of each scale.

By Corollary~\ref{corollary:goldberg_approx}, when the algorithm terminates,
it returns an integral circulation $\tilde{f}$ in $N$ of cost at most
$(1+\eps) \cost(f^*)$, which corresponds to a $(1+\eps)$-approximate min-cost
matching of size $k$ in $G$.
The algorithm terminates in
$\log_2(\overline{\theta}/\underline{\theta}) = O(\log(n/\eps))$ scales.


\subparagraph{Scale initialization.}
In the first scale, we compute a $\overline{\theta}$-optimal pseudoflow by
simply setting $f(\arc vw) \gets 0$ for all arcs in $\vec{E}$.
For subsequent scales, we begin with a $2\theta$-optimal circulation $f$ in $N$.
First, we raise the potential of all nodes in $A$ by $\theta$, all nodes in $B$ by $2\theta$,
and $t$ by $3\theta$.
The potential of $s$ is unchanged.
Such potential change increases the reduced cost of all forward arcs to at least
$-\theta$.

Next, for each backward arc $\arc wv$ in $N_f$ with $c_\pi(\arc wv) < -\theta$,
we set $f(\arc vw) \gets 0$ (that is, make arc $\arc vw$ idle), which replaces the
backward arc $\arc wv$ in $N_f$ with forward arc $\arc vw$ of positive reduced cost.
After this step, the resulting pseudoflow must be $\theta$-optimal as all arcs
of $N_f$ have reduced cost at least $-\theta$.

The desaturation of each backward arc creates one unit of excess.
Since there are at most $3k$ backward arcs, the pseudoflow has at most $3k$ excess after
\textsc{Scale-Init}.
There are $O(n)$ potential updates and $O(k)$ arcs to desautrate,
so the time required for \textsc{Scale-Init} is $O(n)$.

\subparagraph{Refinement.}
The procedure \textsc{Refine} converts a $\theta$-optimal pseudoflow with
$O(k)$ excess into a $\theta$-optimal circulation, using a primal-dual
augmentation algorithm.
A path in $N_f$ is an \EMPH{augmenting path} if it begins at an excess node
and ends at a deficit node.
We call an admissible pseudoflow $g$ in $N_f$ an
\EMPH{admissible blocking flow} if $g$ saturates at least one arc in every
admissible augmenting path in $N_g$.
In other words, there is no admissible excess-deficit path in the residual
network after augmentation by $g$.
Each iteration of \textsc{Refine} finds an admissible blocking flow to be added
to the current pseudoflow in two steps:
\begin{enumerate}
\item
\EMPH{Hungarian search}: a Dijkstra-like search that begins at the set of
excess nodes and raises potential until there is an excess-deficit path
of admissible arcs in $N_f$.
\item
\EMPH{Augmentation}: construct an admissible blocking flow by performing
depth-first search on the set of admissible arcs of $N_f$.
It suffices to repeatedly extract admissible augmenting paths until no more
admissible excess-deficit paths remain.
\end{enumerate}
The algorithm repeats these steps until the total excess becomes zero.
The following lemma bounds the number of iterations in the \textsc{Refine}
procedure at each scale.

\begin{lemma}
\label{lemma:refine_iters}
Let $\theta$ be the scaling parameter and $\pi_0$ the potential function at the
beginning of a scale, such that there exists an integral $2\theta$-optimal
circulation with respect to $\pi_0$.
Let $f$ be a $\theta$-optimal pseudoflow with excess $O(k)$.
Then \textsc{Refine} terminates within $O(\sqrt{k})$ iterations.
\end{lemma}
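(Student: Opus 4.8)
The plan is to run the classical Hopcroft--Karp / Even--Tarjan phase-counting argument for unit-capacity blocking-flow algorithms (exactly as in Goldberg~\etal~\cite{GHKT17}), with the single change that Lemma~\ref{lemma:supp_size} confines the whole analysis to a residual subnetwork with only $O(k)$ arcs instead of $m$. Concretely I would establish two facts. \textbf{(1) Monotonicity:} each iteration of \textsc{Refine} strictly increases the length (number of arcs) of the shortest admissible augmenting path in the current residual network, so after $p$ iterations every admissible augmenting path has at least $p$ arcs. \textbf{(2) A short path always exists:} whenever the current $\theta$-optimal pseudoflow $f$ still has total excess $E$, there is an augmenting path in $N_f$ with $O(k/E)$ arcs. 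Granting these, set $p \coloneqq \ceil{\sqrt{k}}$: after $p$ iterations fact (1) forces the shortest admissible augmenting path to have $\ge p$ arcs, so fact (2) leaves excess $E = O(k/p) = O(\sqrt{k})$; each remaining iteration routes a blocking flow that cancels at least one unit of excess, so at most $O(\sqrt{k})$ iterations remain, for a total of $O(\sqrt{k})$.

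For fact (2) I would invoke the hypothesis that there is a $2\theta$-optimal integral circulation $f^{(0)}$ with respect to $\pi_0$. Realize $f^{(0)} - f$ as an integral pseudoflow $g$ in $N_f$ (push a unit on the forward residual arc $\arc vw$ wherever $f^{(0)}(\arc vw) = 1$ and $f(\arc vw) = 0$, and on the backward residual arc $\arc wv$ wherever $f^{(0)}(\arc vw) = 0$ and $f(\arc vw) = 1$). A unit pushed on a residual arc shifts one unit of imbalance from its tail to its head, so $g$ has divergence exactly $\fsupply_f(v)$ at every node $v$; hence $g$ decomposes into $E$ arc-disjoint augmenting paths of $f$ together with arc-disjoint cycles. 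Every arc appearing in $g$ lies in $\supp(f^{(0)}) \cup \supp(f)$, which by Lemma~\ref{lemma:supp_size}(i)--(ii) has size $O(k)$ --- the excess of $f$ only decreases during \textsc{Refine} and so stays $O(k)$. Thus the shortest of these $E$ arc-disjoint paths has $O(k)/E$ arcs. The one wrinkle is that this path is only guaranteed to be \emph{residual}, not \emph{admissible}, which is what fact (1) talks about; one closes the gap by a reduced-cost estimate --- every residual arc of $N_f$ has reduced cost $\ge -\theta$ ($\theta$-optimality of $f$), and the reversal of the path is residual in $N_{f^{(0)}}$, whose arcs have reduced cost $\ge -2\theta$ with respect to $\pi_0$ --- from which one argues that after the Hungarian search of the current iteration there is an admissible augmenting path no longer than $O(k)/E$.

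For fact (1) I would mimic the standard Dinic-type layering argument: the Hungarian search is a Dijkstra-like relaxation out of the set of excess nodes, so on completion the admissible arcs form a layered shortest-(reduced-cost-)path structure; the subsequent DFS builds a blocking flow that saturates an arc on every shortest admissible augmenting path; and because the Hungarian search only raises potentials, the layering produced in the next iteration refines the previous one, so no admissible augmenting path of the previous shortest length can survive. I expect the main obstacle to be exactly this step together with the admissible-versus-residual reconciliation needed in fact (2): these are the only places where the cost-scaling analysis departs from the plain unit-capacity blocking-flow analysis, and getting the reduced-cost bookkeeping --- with $\pi_0$, the post-\textsc{Scale-Init} potential, and the running potential all in play --- to line up is where the real work lies.
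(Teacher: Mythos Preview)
Your two-phase outline and your Fact~(2) --- forming the difference $g = f^{(0)} - f$, decomposing it into $E$ arc-disjoint excess-to-deficit paths supported on $\supp(f^{(0)}) \cup \supp(f)$, and invoking Lemma~\ref{lemma:supp_size} to bound that support by $O(k)$ --- are correct and are exactly what the paper uses. The gap is Fact~(1). In a cost-scaling algorithm the Hungarian search is a Dijkstra relaxation, not a BFS: the layering it produces is by \emph{reduced cost}, not by hop count, and the blocking flow only guarantees that every \emph{admissible} (i.e.\ min-reduced-cost) augmenting path is hit. Nothing prevents the next round's admissible network from containing an augmenting path with \emph{fewer} arcs than any path used previously, so ``the shortest admissible augmenting path, measured in number of arcs, strictly increases each iteration'' is not something you can extract from the algorithm. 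Your own sketch of Fact~(1) already slips from ``length (number of arcs)'' to ``layered shortest-(reduced-cost)-path structure'', which is a different metric.

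The paper (following Goldberg~\etal) replaces arc count by the scaled potential change $d(v) \coloneqq (\pi(v) - \pi_0(v))/\theta$. Because a blocking flow leaves no admissible excess--deficit path, the next Hungarian search must raise the potential of every remaining excess node by at least $\theta$; hence $d$ on excess nodes grows by $\ge 1$ per iteration, and after $\sqrt{k}$ iterations every excess node has $d \ge \sqrt{k}$. This is the monotone quantity that plays the role of your Fact~(1). For the second phase the paper uses a cut argument rather than a path-length argument: with $E^+ \coloneqq \{\arc vw : f(\arc vw) < f^{(0)}(\arc vw)\}$, the very reduced-cost estimates you wrote down ($c_\pi \ge -\theta$ on residual arcs of $N_f$, and $c_{\pi_0} \ge -2\theta$ on residual arcs of $N_{f^{(0)}}$) force every arc of $E^+$ to span at most three consecutive $d$-levels. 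Since $\abs{E^+} = O(k)$ by Lemma~\ref{lemma:supp_size}, averaging over the level cuts $Y_i = \{v : d(v) > i\}$, $0 \le i \le \sqrt{k}$, yields one of size $O(\sqrt{k})$, and the remaining excess is bounded by that cut. Your path-decomposition idea can be salvaged in these terms (each of the $E$ disjoint paths must descend $\ge \sqrt{k}$ $d$-levels and hence use $\Omega(\sqrt{k})$ arcs of $E^+$), but only after you measure ``length'' by $d$ rather than by arc count.
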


\begin{proof}
We sketch the proof, which is adapted from Goldberg \etal~\cite{GHKT17}.
Let $f_0$ be the assumed $2\theta$-optimal integral circulation with respect to $\pi_0$,
and let $\pi$ be the potential maintained during \textsc{Refine}.
Let $\EMPH{$d(v)$} \coloneqq (\pi(v) - \pi_0(v))/\theta$, that is, the increase in potential
at $v$ in units of $\theta$.
We divide the iterations of \textsc{Refine} into two phases: before and after
every (remaining) excess node has $d(v) \geq \sqrt{k}$.
Each Hungarian search raises excess potential by at least $\theta$,
since we use blocking flows.
Thus, the first phase lasts at most $\sqrt{k}$ iterations.

At the start of the second phase, consider the set of arcs
$\EMPH{$E^+$} \coloneqq \{\arc vw \in \vec{E} \mid f(\arc vw) < f_0(\arc vw)\}$.
One can argue that the remaining excess with respect to $f$ is bounded above by
the size of any cut separating the excess and deficit nodes \cite[Lemma~4]{GHKT17}.
The proof examines cuts $\EMPH{$Y_i$} \coloneqq \{v \mid d(v) > i\}$ for $0 \leq i \leq \sqrt{k}$.
By $\theta$-optimality of $f$ and $2\theta$-optimality of $f_0$, one can show
that each arc in $E^+$ crosses at most 3 cuts.
Furthermore, the size of $E^+$ is $O(k)$, bounded by the support size of $f$ and $f_0$.
Averaging, there is a cut among $Y_i$s of size at most $3k/\sqrt{k}$,
so the total excess remaining is $O(\sqrt{k})$.
Each iteration of $\textsc{Refine}$ eliminates at least one unit of excess,
so the number of second phase iterations is also at most $O(\sqrt{k})$.
\end{proof}

In the next subsection we show that after $O(n\polylog n)$ time preprocessing,
an iteration of \textsc{Refine} can be performed in $O(k\polylog n)$ time
(Lemma~\ref{lemma:refine_iter_time}).
By Lemma~\ref{lemma:refine_iters} and the fact the algorithm terminates in
$O(\log(n/\eps))$ scales, the overall running time of the algorithm is
$O((n + k^{3/2})\polylog n \log(1/\eps))$, as claimed in Theorem~\ref{theorem:gmcm}.

\subsection{Fast implementation of refinement stage}
\label{SS:fast_refine}

We now describe a fast implementation of \textsc{Refine}.
The Hungarian search and augmentation steps are similar:
each traversing through the residual network using admissible arcs starting
from the excess nodes.
Due to lack of space, we only describe the former.

At a high level, let \EMPH{$X$} be the subset of nodes visited by the Hungarian search
so far.
Initially $X$ is the set of excess nodes.
At each step, the algorithm finds a minimum-reduced-cost arc $\arc vw$ in $N_f$
from $X$ to $V \setminus X$.
If $\arc vw$ is not admissible, the potential of all nodes in $X$ is increased
by $\Ceil{c_\pi(\arc vw)/\theta}$ to make $\arc vw$ admissible.
If $w $ is a deficit node, the search terminates.
Otherwise, $w$ is added to $X$ and the search continues.

Implementing the Hungarian search efficiently is more difficult than in
Section~\ref{section:hung} because (a) excess nodes may show up in $A$ as well as in $B$,
(b) a balanced node may become imbalanced later in the scales,
and (c) the potential of excess nodes may be non-uniform.
We therefore need a more complex data structure.

We call a node $v$ of $N$ \EMPH{dead} if $\fsupply_f(v) = 0$ and no arc of
$\supp(f)$ is incident to $v$; otherwise $v$ is \EMPH{alive}.
Note that $s$ and $t$ are always alive.
Let \EMPH{$\alive{A}$} denote the set of alive nodes in $A$; define \EMPH{$\alive{B}$} similarly.
There are only $O(k)$ alive nodes, as each can be charged to its
adjoining $\supp(f)$ arcs or its imbalance.
We treat alive and dead nodes separately to implement the Hungarian search
efficiently.
By definition, dead nodes only adjoin forward arcs in $N_f$.
Thus, the in-degree (resp.\ out-degree) of a node in $\dead{A}$ (resp.\ $\dead{B}$)
is 1, and any path passing through a dead node has a subpath of the form
$s \arcto v \arcto b$ for some $b \in B$ or $a \arcto v \arcto t$ for some $a \in A$.
Consequently, a path in $N_f$ may have at most two consecutive dead nodes,
and in the case of two consecutive dead nodes there is a subpath of the
form $s \arcto v \arcto w \arcto t$ where $v \in \dead{A}$ and $w \in \dead{B}$.
We call such paths, from an alive node to an alive node
with only dead interior nodes, \EMPH{alive paths}.
Let the reduced cost $c_\pi(\Pi)$ of an alive path $\Pi$ be the sum of $c_\pi$ over its arcs.
We say $\Pi$ is \EMPH{weakly admissible} if $c_\pi(\Pi) \leq 0$.

We find the min-reduced-cost alive path of lengths $1$, $2$, and $3$
leaving $X$, then relax the cheapest among them (raise potential of $X$ by $c(\Pi)$ and
add every node of $\Pi$ into $X$).
Essentially, relaxing alive paths ``skips over'' dead nodes.
Since reduced costs telescope on paths, weak admissibility of an alive path
depends only on the potential of its alive endpoints.
Thus, we can query the minimum alive path using a partial assignment
of $\pi$ on only the alive nodes, leaving $\pi$ over the dead nodes untracked.
We now describe a data structure for each path length.
Note that our ``time budget'' per Hungarian search is $O(k\polylog n)$.

\subparagraph{Finding length-1 paths.}
This data structure finds a min-reduced-cost arc from an alive node of
$X$ to an alive node of $V \setminus X$.
There are $O(k)$ backward arcs, so the minimum among backward arcs can be
maintained explicitly in a priority queue and retrieved in $O(1)$ time.

There are three types of forward arcs: $\arc sa$ for some $a \in \alive{A}$,
$\arc bt$ for some $b \in \alive{B}$, and bipartite arc $\arc ab$ with two
alive endpoints.
Arcs of the first (resp.\ second) type can be found by maintaining
$\alive{A} \setminus X$ (resp.\ $\alive{B} \cap X$) in a priority queue,
but should only be queried if $s \in X$ (resp.\ $t \not\in X$).
The cheapest arc of the third type can be maintained using a dynamic
(additively weighted) bichromatic closest pair (BCP) data structure between
$\alive{A} \cap X$ and $\alive{B} \setminus X$,
with reduced cost as the weighted pair distance.
Such BCP data structure can be implemented so that insertions/deletions can be performed
in $O(\polylog k)$ time \cite{KMRSS17}.

\subparagraph{Finding length-2 paths.}
We describe how to find a cheapest path of the form $s \arcto v \arcto b$ where
$v$ is dead and $b \in \alive{B}$.
A cheapest path $a \arcto v \arcto t$ can be found similarly.
Similar to length-1 paths, we only query paths starting at $s$ if $s \in X$
and paths ending at $t$ if $t \not\in X$.

Note that $c_\pi(s \arcto v \arcto b) = c(v, b) + \pi(b) - \pi(s)$.
Since $\pi(s)$ is common in all such paths, it suffices to find a pair $(v,w)$ between
$\dead{A}$ and $\alive{B} \setminus X$ minimizing $c(v, w) + \pi(w)$.
This is done by maintaining a dynamic BCP data structure between
$\dead{A}$ and $\alive{B} \setminus X$ with
the cost of a pair $(v, w)$ being $c(v, w) + \pi(w)$.
We may require an update operation for each alive node added to $X$ during the
Hungarian search, of which there are $O(k)$, so the time spent during a search
is $O(k\polylog n)$.

Since the size of $\dead{A}$ is at least $r-k$, we cannot
construct this BCP from scratch at the beginning of each iteration.
To resolve this, we use the idea of rewinding from Section~\ref{section:hung},
with a slight twist.
There are now \emph{two} ways that the initial BCP may change across
consecutive Hungarian searches: (1) the initial set $X$ may change as nodes
lose excess through augmentation, and (2) the set of alive/dead nodes in $A$ may
change.
The first is identical to the situation in Section~\ref{section:hung};
the number of excess depletions is $O(k)$ over the course of \textsc{Refine}.
For the second, the alive/dead status of a node can change only if the
blocking flow found passes through the node.
By Lemma~\ref{lemma:cost_scale_count} below, there are $O(k)$ such changes per Hungarian search,
which can be done in $O(k\polylog n)$ time.

\subparagraph{Finding length-3 paths.}
We now describe how to find the cheapest path of the form $s \arcto v \arcto w \arcto t$
where $v \in \dead{A}$ and $w \in \dead{B}$.
Note that $c_\pi(s \arcto v \arcto w \arcto t) = c(\arc vw) - \pi(s) + \pi(t)$.
A pair $(v, w)$ between $\dead{A}$ and $\dead{B}$ minimizing $c(v, w)$
can be found by maintaining a dynamic BCP data structure similar to the
case of length-2 paths.

This BCP data structure has no dependency on $X$---the only update required comes from
membership changes to $\alive{A}$ or $\alive{B}$ after an augmentation.
Applying Lemma~\ref{lemma:cost_scale_count} again,
there are $O(k)$ alive/dead updates caused by an augmentation, so the time for
these updates per Hungarian search is $O(k\polylog n)$.

\subparagraph{Updating potential.}
Potential updates for alive nodes can be handled in a batched fashion as in Section~\ref{section:hung}.
The three data structures above have no dependency on the dead node potential;
we leave them untracked as described before.
The Hungarian search remains intact since alive nodes are visited
in the same order as when using arc-by-arc relaxations.
However, we need values of $\pi$ on all nodes at the end of a scale (for the
next \textsc{Scale-Init}) and for individual dead nodes whenever they become
alive (after augmentation).

We can reconstruct a ``valid'' potential in these situations.
To recover potential for $v \in \dead{A}$ we set $\pi(v) \gets \pi(s)$,
and for $v \in \dead{B}$ we set $\pi(v) \gets \pi(t)$.
Straightforward calculation shows that such potential (1) preserves $\theta$-optimality,
and (2) makes $\Pi$ (arc-wise) admissible for any weakly admissible alive path $\Pi$.
Hence, a blocking flow composed of weakly admissible alive paths is admissible
under the recovered potential.

The following lemma is crucial to the analysis of running time for the Hungarian search,
bounding both the number of relaxations and potential update/recovery operations.

\begin{lemma}
\label{lemma:cost_scale_count}
Both Hungarian search and augmentation stages explore $O(k)$ nodes,
and the blocking flow found in augmentation stage is incident to $O(k)$ nodes.
\end{lemma}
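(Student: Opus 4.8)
The plan is to bound all three quantities by the same counting argument: charging explored/incident nodes to either backward residual arcs, imbalance, or flow newly routed by the blocking flow. First I would recall (Lemma~\ref{lemma:supp_size} and the alive-node discussion) that at the beginning of \textsc{Refine} the pseudoflow $f$ has $O(k)$ excess, hence $O(k)$ backward residual arcs in $N_f$ and $O(k)$ alive nodes. The key structural fact is that every node the Hungarian search \emph{adds} to $X$ is either an alive node, or a dead interior node of a relaxed alive path. Alive nodes added: each relaxation pulls $O(1)$ alive endpoints into $X$, and since a path in $N_f$ has at most two consecutive dead nodes, the number of distinct alive nodes is $O(k)$, so at most $O(k)$ relaxations occur and $O(k)$ alive nodes are explored. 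Dead interior nodes: each relaxed alive path has at most two dead interior nodes, so the dead nodes explored are also $O(k)$. That handles the Hungarian search count.

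For the augmentation stage, I would argue that the depth-first search over admissible arcs only ever descends along alive paths emanating from $X$ (the set already built by the Hungarian search), so it visits a subset of the nodes reachable through those $O(k)$ alive nodes plus their dead interiors; a standard blocking-flow DFS retreats along an arc at most once after it is found useless, so the total work — and hence the set of explored nodes — is within a constant factor of the admissible-arc count incident to alive nodes, which is $O(k)$ (each alive node has bounded ``useful'' degree: dead nodes have in/out-degree~$1$, and there are $O(k)$ backward and $O(k)$ dummy/bipartite arcs between alive endpoints). The subtle point is that the DFS may \emph{temporarily} route flow through a long chain of dead nodes, but because such a chain collapses to a length-$\le 3$ alive path, routing one unit through it touches $O(1)$ dead nodes, and the blocking flow sends at most $O(k)$ units total (it is integral and bounded by the $O(k)$ excess), giving $O(k)$ nodes incident to the blocking flow.

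The main obstacle I anticipate is the second claim — that the \emph{blocking flow} (not just the final augmenting paths) is incident to only $O(k)$ nodes — because a naive DFS-based blocking flow could in principle wander into the $\Omega(r)$ dead nodes before backtracking. The resolution is to show the implementation never materializes dead nodes individually: it operates purely on the alive-path BCP structures of lengths $1,2,3$, so a ``step'' of the blocking-flow DFS corresponds to relaxing or saturating one alive path, and both the number of saturated alive paths (bounded by the excess, $O(k)$) and the number of relaxations needed before a deficit is reached (bounded as in the Hungarian search, $O(k)$) are controlled. Once this is in place, the alive-path-to-arc-wise translation of potentials (already established: setting $\pi(v)\gets\pi(s)$ for $v\in\dead{A}$, $\pi(v)\gets\pi(t)$ for $v\in\dead{B}$) certifies that the flow we built is a genuine admissible blocking flow in $N_f$, so no correctness is lost by never touching the remaining dead nodes. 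Combining the three bounds gives the lemma.
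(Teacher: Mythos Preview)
The paper states this lemma without proof, so there is no line-by-line comparison to make; your overall strategy---charge everything to the $O(k)$ alive nodes and the $O(k)$ arcs of $\supp(f)$, and treat dead nodes only as interior vertices of length-$\le 3$ alive paths---is the intended one, and your argument for the Hungarian-search bound is correct.

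There is, however, a genuine gap in your argument for the augmentation stage. You assert that ``the admissible-arc count incident to alive nodes \ldots\ is $O(k)$,'' justified by ``$O(k)$ backward and $O(k)$ dummy/bipartite arcs between alive endpoints.'' Backward arcs and dummy arcs with alive endpoints are indeed $O(k)$, but \emph{forward bipartite} arcs between alive endpoints are not: there can be $|\alive{A}|\cdot|\alive{B}|=\Theta(k^2)$ of them, and nothing prevents many from being admissible after the Hungarian search. So the ``total work $\le$ admissible-arc count $=O(k)$'' step fails.

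Your resolution paragraph actually contains the correct accounting, but it needs to be made precise. In the DFS on the alive graph, every advance along an alive path is eventually either (i) saturated as part of an augmenting path, or (ii) undone by a retreat. For (i), the total number of saturated alive-path arcs equals the number of arcs in the blocking flow; since each augmenting path contains at most two dead nodes (a dead $a$ can only be entered from $s$, a dead $b$ can only exit to $t$), and the paths are arc-disjoint on the unit-capacity arcs $s\arcto a$ and $b\arcto t$, the blocking flow has $O(k)$ arcs. For (ii), you must say explicitly that when the DFS retreats from an alive node $w$ it deletes $w$ from \emph{all} BCP structures, so $w$ is never re-entered; this caps the number of retreats at the number of alive nodes, $O(k)$. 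Then advances $=$ saturations $+$ retreats $=O(k)$, and each advance touches $O(1)$ nodes. Without the ``delete $w$ on retreat'' step, the DFS could revisit $w$ once per admissible incoming alive path, and you are back to the $O(k^2)$ problem.
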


Augmentation can also be implemented in $O(k\polylog n)$ time, after
$O(n\polylog n)$ time preprocessing, using similar data structures.
We thus obtain the following:

\begin{lemma}
\label{lemma:refine_iter_time}
After $O(n\polylog n)$ time preprocessing, each iteration of \textsc{Refine} can be
implemented in $O(k\polylog n)$ time.
\end{lemma}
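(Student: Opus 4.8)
The plan is to assemble the running-time bound for a single iteration of \textsc{Refine} from the per-structure costs already established above, using Lemma~\ref{lemma:cost_scale_count} as the uniform source of the ``$O(k)$ operations per search'' bounds. First I would itemize the work done by the Hungarian search: it maintains four structures --- the explicit priority queue over backward residual arcs, the BCP between $\alive A \cap X$ and $\alive B \setminus X$ (length-1 bipartite arcs), the BCP between $\dead A$ and $\alive B \setminus X$ (length-2 paths $s \arcto v \arcto b$), its mirror between $\alive A \cap X$ and $\dead B$ (length-2 paths $a \arcto v \arcto t$), and the BCP between $\dead A$ and $\dead B$ (length-3 paths $s \arcto v \arcto w \arcto t$) --- plus the auxiliary priority queues tracking $\alive A \setminus X$ and $\alive B \cap X$ for the dummy arcs. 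Each relaxation step performs $O(1)$ queries and $O(1)$ updates across these structures, and each update costs $O(\polylog k) = O(\polylog n)$ by the dynamic BCP bound of Kaplan~\etal~\cite{KMRSS17}. By Lemma~\ref{lemma:cost_scale_count}, the search explores $O(k)$ nodes, hence performs $O(k)$ relaxations, so the search proper costs $O(k\polylog n)$.

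Next I would account for the rewinding that restores the initial state of the two $X$-independent-per-iteration-but-$X$-dependent-within-iteration structures (the length-2 BCPs) at the start of each search. As explained above, two effects perturb the initial BCP state between consecutive searches: excess depletions through augmentation, of which there are $O(k)$ over all of \textsc{Refine}, and alive/dead status flips, which by Lemma~\ref{lemma:cost_scale_count} number $O(k)$ per augmentation since only nodes touched by the blocking flow can change status. Replaying these as insert/delete pairs on the BCP costs $O(k\polylog n)$ per search. The length-3 BCP depends on $X$ not at all, so it only absorbs the $O(k)$ alive/dead updates, again $O(k\polylog n)$. Potential bookkeeping is handled with the batched $\delta$-offset trick of Section~\ref{section:hung} for alive nodes ($O(k)$ updates to the offset and to stored values per search), and dead-node potential is never tracked during the search; the $O(k)$ potential recoveries needed when dead nodes become alive after augmentation each cost $O(1)$ by the rule $\pi(v)\gets\pi(s)$ or $\pi(v)\gets\pi(t)$, and the correctness of this recovery (preservation of $\theta$-optimality and weak-to-strong admissibility of alive paths) is exactly the ``straightforward calculation'' asserted above. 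Summing, the Hungarian search of one iteration is $O(k\polylog n)$, and the augmentation step is symmetric: a DFS over admissible alive arcs, using the same BCP/priority-queue machinery to locate admissible arcs, touching $O(k)$ nodes by Lemma~\ref{lemma:cost_scale_count}, hence also $O(k\polylog n)$.

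Finally I would dispatch the preprocessing bound: before the first scale we build the minimum spanning tree $T$ and compute $i^*$ in $O(n\log n)$ time by Lemma~\ref{lemma:starting_scale}(\ref{item:starting_scale1}), and we build the initial versions of all the BCP structures. The only structures that could be expensive to initialize are the ones touching $\dead A$ or $\dead B$, which can have $\Omega(n)$ points; but these are constructed once from scratch in $O(n\polylog n)$ time and thereafter only incrementally maintained by rewinding, never rebuilt --- this is precisely why the rewinding mechanism is needed. Adding the $O(n)$ cost of \textsc{Scale-Init} (which is subsumed), the preprocessing is $O(n\polylog n)$ and each \textsc{Refine} iteration is $O(k\polylog n)$, which is the claim.

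I expect the only real subtlety --- already flagged in the body as a ``slight twist'' --- to be the bookkeeping for the length-2 BCPs, which are simultaneously $X$-dependent (so they must be rewound to their per-iteration initial state, like the BCP in Section~\ref{section:hung}) and membership-dependent on the volatile sets $\dead A, \dead B$ (so the rewind target itself moves between iterations). The argument that both sources of drift are $O(k)$ per iteration rests entirely on Lemma~\ref{lemma:cost_scale_count}'s bound on nodes incident to the blocking flow; once that is in hand the bound is a matter of carefully charging each BCP operation to a relaxation, an excess depletion, or a status flip.
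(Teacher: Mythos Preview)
Your proposal is correct and follows essentially the same approach as the paper: the lemma is a summary of the data-structure analysis in Section~\ref{SS:fast_refine}, and you have accurately itemized the length-1/2/3 alive-path BCPs, the backward-arc and dummy-arc priority queues, the rewinding mechanism with its two sources of drift, the batched potential bookkeeping with dead-node recovery, and the appeal to Lemma~\ref{lemma:cost_scale_count} to cap the number of relaxations and status flips at $O(k)$ per iteration. The only cosmetic difference is that you fold the MST/$i^*$ computation into the preprocessing budget, whereas the paper treats that as part of the global setup; this is harmless since it is a one-time $O(n\log n)$ cost.
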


\section{Transportation algorithm}
\label{section:orlin}

Given two point sets $A$ and $B$ in $\reals^2$ of sizes $r$ and $n$ respectively
and a supply-demand function $\tsupply: A \cup B \to \ints$
as defined in the introduction, we present an $O(rn^{3/2}\polylog n)$ time
algorithm for computing an optimal transport map between $A$ and $B$.
By applying this algorithm in the case of $r \leq \sqrt{n}$ and the one by
Agarwal \etal~\cite{AFPVX17arxiv} when $r > \sqrt{n}$, we prove Theorem~\ref{theorem:orlin}.
We use a standard reduction to the uncapacitated min-cost flow problem and use
Orlin's algorithm~\cite{O93} as well as some of
the ideas from Agarwal \etal~\cite{AFPVX17arxiv} for efficient implementation under the geometric settings.
We first present an overview of the algorithm and then describe its fast
implementation that achieves the desired running time.

\subsection{Overview of the algorithm}

Orlin's algorithm follows an excess-scaling paradigm and the primal-dual framework.
It maintains a \EMPH{scale parameter} $\Delta$, a flow function $f$, and
potential $\pi$ on the nodes.
Initially $\Delta$ is equal to the total supply, $f = 0$, and $\pi = 0$.
We fix a constant parameter $\alpha \in (0.5, 1)$.
A node $v$ is called \EMPH{active} if the magnitude of imbalance of $v$ is at least $\alpha\Delta$.
At each step, using the Hungarian search, the algorithm finds an admissible
excess-to-deficit path between active nodes in the residual network and pushes a flow
of amount $\Delta$ along this path.%
\footnote{Note that this augmentation may convert an excess node into a deficit node.}
Repeat the process until either active excess or deficit nodes are gone; when this happens, $\Delta$ is halved.
The sequence of augmentations with a fixed value of $\Delta$ is called an
\EMPH{excess scale}.

The algorithm also performs two preprocessing steps at the beginning of each excess scale.
First, if $f(\arc vw) \geq 3n\Delta$, $\arc vw$ is contracted to a single node $z$ with
$\fsupply(z) = \fsupply(v) + \fsupply(w)$.%
\footnote{Intuitively, $f(\arc vw)$ is so high that future scales cannot deplete
the flow on $\arc vw$.}
Second, if there are no active excess nodes and $f(\arc vw) = 0$ for every arc $\arc vw$, then $\Delta$
is aggresively lowered to $\max_v \fsupply(v)$.

When the algorithm terminates, an optimal circulation in the
contracted network is found.
We use the algorithm described in Agarwal~\etal~\cite{AFPVX17arxiv} to recover
an optimal circulation for the original network in $O(n\polylog n)$ time.
Orlin showed that the algorithm terminates within $O(n\log n)$ scales and
performs a total of $O(n\log n)$ augmentations.
In the next subsection, we describe an algorithm that, after $O(n\polylog n)$ time
preprocessing,
finds an admissible excess-to-deficit path
in $O(r\sqrt{n} \polylog n)$ amortized time.
Summing this cost over all augmentations, we obtain the desired running time.

\subsection{An efficient implementation}

Recall in the previous sections that we could bound the running time of the
Hungarian search by the size of $\supp(f)$.
Here, the number of active imbalanced nodes at any scale is $O(r)$, and the
length of an augmenting path is also $O(r)$.
Therefore one might hope to find an augmenting path in $O(r\polylog n)$ time,
by adapting the algorithms described in Sections~\ref{section:hung} and
\ref{section:goldberg}.
The challenge is that $\supp(f)$ may have $\Omega(n)$ size,
therefore an algorithm which runs in time proportional to the support size is no longer
sufficient.
Still, we manage to implement Hungarian search in time $O(r\sqrt{n}\polylog n)$,
by exploiting a few properties of $\supp(f)$ as described below.

We note that each arc of $\supp(f)$ is admissible with reduced cost $0$,
so we prioritize the relaxation of support arcs as soon as they arrive in
$X \times (V \setminus X)$, over the non-support arcs.
This strategy ensures the following crucial property.

\begin{lemma}
\label{lemma:supp_acyclic}
If the support arcs $\supp(f)$ are relaxed as soon as possible, $\supp(f)$ is acyclic.
\end{lemma}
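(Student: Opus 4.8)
The plan is to argue by contradiction: suppose that at some point during a Hungarian search, after relaxing support arcs as early as possible, the relaxed part of the residual network contains a directed cycle $C$ consisting entirely of support arcs of $f$. I will track the order in which the nodes of $C$ enter the reachable set $X$, and show that this ordering is inconsistent with the ``relax support arcs as soon as possible'' rule.

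First I would set up the basic observations. Every arc of $\supp(f)$ is admissible with reduced cost exactly $0$ (by the discussion preceding the lemma, since Orlin's algorithm maintains a $0$-optimal $f$), so the Hungarian search never needs to raise potentials before relaxing such an arc --- it is ``free'' the moment both its tail is in $X$ and its head is not. Consequently, in the ordering in which nodes are added to $X$, a support arc $\arc{v}{w}$ can only be relaxed with $w$ entering $X$ immediately after $v$-side becomes available; more precisely, once $v \in X$, the head $w$ of any support arc $\arc{v}{w}$ is added to $X$ before any non-support frontier arc is relaxed and before any further potential increase. I would phrase this as: for each support arc $\arc{v}{w}$, either $w$ enters $X$ no later than $v$, or $w$ enters $X$ at the very next relaxation step after $v$ enters $X$.

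Next, consider the cycle $C = v_1 \to v_2 \to \cdots \to v_\ell \to v_1$ of support arcs that is entirely contained in the relaxed subnetwork at the moment of contradiction. Let $v_i$ be the node of $C$ that enters $X$ last (breaking ties arbitrarily), and look at its predecessor $v_{i-1}$ on $C$ (indices mod $\ell$). Since $\arc{v_{i-1}}{v_i}$ is a support arc and $v_{i-1}$ entered $X$ strictly before $v_i$ (by choice of $v_i$ as the last), the ``relax support as soon as possible'' rule forces $v_i$ to be added to $X$ at the relaxation immediately following $v_{i-1}$'s addition. But the same argument applies to the successor arc $\arc{v_i}{v_{i+1}}$: the rule would force $v_{i+1}$ to be added immediately after $v_i$, contradicting that $v_i$ is the last node of $C$ to enter $X$ --- unless $v_{i+1}$ was already in $X$ before $v_i$, which is also ruled out since then $v_{i+1}$, not $v_i$, contradicts the last-entered choice, or we simply re-pick the last node. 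Carefully bookkeeping these cases (in particular handling the degenerate situation where some nodes of $C$ were already in $X$ at the start of the search, e.g.\ the initial excess nodes) closes the contradiction.

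The main obstacle I anticipate is the bookkeeping around nodes of $C$ that are in $X$ \emph{before} the Hungarian search begins relaxing frontier arcs --- the initial active excess set --- and around ties, i.e.\ several support arcs incident to the same node. For those I would argue that the initial $X$ cannot contain a full support cycle (if $C \subseteq X$ initially, then since $\supp(f)$ was acyclic at the end of the previous augmentation and the only change is that some arcs were saturated/desaturated along a path, a careful invariant shows no cycle is created), and that when multiple support arcs leave a node the rule processes all of them consecutively, so the ``immediately after'' relation still yields a well-defined successor chain along $C$. Chasing this chain around the cycle must return to the start, which is impossible since the ``entered-$X$'' timestamps are strictly increasing along it except at the single wrap-around point, giving the contradiction. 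Thus no support cycle can appear in the relaxed subnetwork, and since the search only ever works with relaxed arcs, $\supp(f)$ stays acyclic throughout.
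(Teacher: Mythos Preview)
Your proposal has a fundamental misreading of what the lemma is asserting, and as a result the argument does not prove the right statement.

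In the transportation network all arcs of $\supp(f)$ go from $A$ to $B$, so there is no directed cycle of support arcs to begin with; the content of the lemma is that $\supp(f)$ is acyclic \emph{as an undirected graph}, i.e.\ it is a forest.  This is an invariant on the flow $f$ itself, and it can only be violated when $f$ changes --- that is, when an augmenting path is pushed.  The Hungarian search merely explores the residual network and never alters $\supp(f)$, so arguing about the order in which nodes enter $X$ during one search cannot by itself rule out a cycle that was already present in $\supp(f)$ before the search began.  What actually has to be shown is inductive: assuming $\supp(f)$ is a forest before an augmentation, the augmenting path $\Pi$ produced under the ``relax support arcs first'' rule, once pushed, keeps $\supp(f)$ a forest.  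The reason this holds is that every support arc has reduced cost $0$ in both residual directions, so the moment the search touches any node of a support component it immediately absorbs the whole component into $X$; consequently each non-support forward arc on $\Pi$ leads from one support component to a distinct, previously unvisited component, and augmenting along $\Pi$ just merges trees without closing a cycle.

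Even on its own terms your contradiction does not close.  You pick $v_i$ as the last cycle node to enter $X$ and then observe that the rule forces $v_{i+1}$ to be added ``immediately after'' $v_i$ --- but if $v_{i+1}$ is already in $X$ (which is perfectly consistent with $v_i$ being last), nothing is forced and there is no contradiction.  The ``re-pick the last node'' fix is circular: some node really is last, and its cycle-successor being already in $X$ is exactly what you should expect.

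The paper states this lemma without proof, so there is no argument to compare against; but the approach you need is the inductive one sketched above, not an analysis confined to a single search.
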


Next, similar to Section~\ref{section:goldberg}, we call node $u$
\EMPH{alive} if (a) $u$ is an active imbalanced node or (b) if $u$ is incident to
an arc of $\supp(f)$; $u$ is \EMPH{dead} otherwise.
Unlike in Section~\ref{section:goldberg}, once a node becomes alive it cannot
be dead again.
Furthermore, a dead node may become alive only at the beginning of a scale
(after the value of $\Delta$ is reduced).
Also, an augmenting path cannot pass through a dead node.
Therefore, we can ignore all dead nodes during Hungarian search,
and update the set of alive/dead nodes at the beginning of a scale.

Let $\EMPH{$B^\star$} \subseteq \alive{B}$ be the set of nodes that are either (a) active
imbalanced nodes or (b) incident to \emph{exactly one} arc of $\supp(f)$.
Lemma~\ref{lemma:supp_acyclic} implies that $\alive{B} \setminus B^\star$ has size $O(r)$.
We can therefore find the min-reduced-cost arc between $X \cap \alive{A}$ and $\alive{B} \setminus (B^\star \cup X)$
using a BCP data structure as in Section~\ref{section:hung}, along with lazy
potential updates and the rewinding mechanism.
The total time spent by Hungarian search on the nodes of $\alive{B} \setminus B^\star$ will be $O(r\polylog n)$.
We subsequently focus on handling $B^\star$.

\subparagraph{Handling {$B^\star$}.}
We now describe how we query a min-reduced-cost arc between $X \cap \alive{A}$
and $B^\star \setminus X$.
Each node $b \in B^\star$ is incident to exactly one arc in
$\supp(f)$.
We partition these nodes into clusters depending on their unique neighbor in $N_f$.
That is, for a node $a \in \alive{A}$,
let $\EMPH{$B_a^\star$} \coloneqq \{b \in B^\star \mid {\arc ab} \in \supp(f)\}$.
We refer to $B_a^\star$ as the \EMPH{star} of $a$.

The crucial observation is that $a$ is the only node in $N_f$ reachable from
each $b \in B_a^\star$, so once the Hungarian search reaches a node of $B_a^\star$ and thus
$a$ (recall we prioritize relaxing support arcs), the Hungarian search need
not visit any other nodes of $B_a^\star$, as they will only lead to $a$.
Hence, as soon as one node of $B_a^\star$ is reached, all other nodes of $B_a^\star$ can be
discarded from further consideration.
Using this observation, we handle $B^\star$ as follows.

We classify each $a \in \alive{A}$ as \EMPH{light} or \EMPH{heavy}:
heavy if $\abs{B_a^\star} \geq \sqrt{n}$,
and light if $\abs{B_a^\star} \leq 2\sqrt{n}$.
Note that if $\sqrt{n} \leq \abs{B_a^\star} \leq 2\sqrt{n}$ then $a$ may be classified
as light or heavy.
We allow this flexibility to implement reclassification in a lazy manner.
Namely, a light node is reclassified as heavy once $\abs{B_a^\star} > 2\sqrt{n}$,
and a heavy node is reclassified as light once $\abs{B_a^\star} < \sqrt{n}$.
This scheme ensures that the star of $a$ has gone through at least $\sqrt{n}$
updates between two successive reclassifications,
and these updates will pay for the time spent in updating the data structure
when $a$ is re-classified.

For each heavy node $a \in \alive{A} \setminus X$, we maintain a BCP data
structure between $B_a^\star$ and $X \cap \alive{A}$.
Next, for all light nodes in $\alive{A} \setminus X$, we collect their stars into
a single set $\EMPH{$B_<^\star$} \coloneqq \bigcup_{\text{$a$ light}} B_a^\star$.
We maintain one single BCP data structure between $B_<^\star$ and $\alive{A} \cap X$.
Thus, at most $r$ different BCP data structures are maintained for stars.

Using these data structures, we can compute and relax a min-reduced-cost arc $\arc vw$ between
$\alive{A} \cap X$ and $B^\star \setminus X$.
If $w$ lies in some star $B_a^\star$, then we also add $a$ into $X$.
If $a$ is light, then we delete $B_a^\star$ from $B_<^\star$ and update the BCP data structure of $B_<^\star$.
If $a$ is heavy, then we stop querying the BCP data structure of $B_a^\star$ for the
remainder of the search.
Finally, since $a$ becomes part of $X$, $a$ is added to all $O(r)$ BCP data structures.
Recall that $r \leq \sqrt{n}$ by assumption.
Adding arc $\arc vw$ thus involves performing $O(\sqrt{n})$
insertion/deletion operations in various BCP data structures, thereby taking
$O(\sqrt{n}\polylog n)$ time.

\subparagraph{Putting it together.}
While proof is omitted, the following lemma bounds the running time of the
Hungarian search.

\begin{lemma}
Assuming all BCP data structures are initialized correctly, the Hungarian search
terminates within $O(r)$ steps, and takes $O(r\sqrt{n}\polylog n)$ time.
\end{lemma}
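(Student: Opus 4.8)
The plan is to bound the two quantities separately: the number of steps (node explorations) of the Hungarian search, and the time cost charged to each step. For the step count, the key is that every node visited by the Hungarian search is alive, and among alive nodes the relevant ones form a set of size $O(r)$. Specifically, by Lemma~\ref{lemma:supp_acyclic} the support $\supp(f)$ is acyclic, and since $\dead{A}$ and $\dead{B}$ are irrelevant to augmenting paths, every relaxation either pulls in an alive node of $\alive{B} \setminus B^\star$, pulls in a representative of a star $B_a^\star$ together with $a \in \alive{A}$, or pulls in $s$ or $t$. The number of active imbalanced nodes is $O(r)$ (this follows from the excess-scaling invariant, since total excess is $O(r\Delta)$ by the contraction step and each active node carries $\geq \alpha\Delta$), and $\abs{\alive{A}} = O(r)$, $\abs{\alive{B} \setminus B^\star} = O(r)$ by the acyclicity-based counting already argued. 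Once a node of a star $B_a^\star$ is reached, no further node of $B_a^\star$ is ever explored, so each of the $O(r)$ stars contributes only $O(1)$ explored nodes. Summing, the search performs $O(r)$ relaxation steps.

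For the running time per step, I would tally the data-structure operations triggered by a single relaxation. Retrieving the overall minimum-reduced-cost frontier arc requires taking the minimum over: the $\alive{B} \setminus B^\star$ BCP, the (at most $r$) heavy-star BCPs, the single $B_<^\star$ BCP, the backward-arc priority queue, and the length-2/3-style dummy-arc structures from Section~\ref{section:goldberg} reused here. There are $O(r)$ of these, so a query costs $O(r\polylog n)$. When the chosen arc $\arc vw$ is relaxed and an alive node $a$ is added to $X$, $a$ must be inserted into every BCP data structure whose ``$X$-side'' set it joins --- there are $O(r)$ such structures, each insertion costing $O(\polylog n)$, for $O(r\polylog n)$ per relaxation. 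Additionally, if $w$ lies in a light star $B_a^\star$ we must delete all of $B_a^\star$ from the combined light-star BCP $B_<^\star$; since $a$ is light, $\abs{B_a^\star} \leq 2\sqrt{n}$, so this costs $O(\sqrt{n}\polylog n)$. Lazy reclassification ensures that promoting a light node to heavy (rebuilding a $\Theta(\sqrt n)$-size BCP) is amortized against the $\geq \sqrt{n}$ star-updates since its last reclassification, contributing $O(\polylog n)$ amortized per star update; across the search these total $O(r\sqrt n \polylog n)$. Combining, each of the $O(r)$ steps costs $O(\sqrt{n}\polylog n)$ amortized (the $r\polylog n$ term is dominated since $r \le \sqrt n$), giving $O(r\sqrt{n}\polylog n)$ total.

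The assumption that ``all BCP data structures are initialized correctly'' lets me charge the cost of (re)building structures at scale boundaries --- from the rewinding mechanism and from dead-to-alive transitions --- separately; here I only need to verify that no single Hungarian search, given correctly initialized structures, exceeds the stated bound. The main obstacle is the careful accounting of the light-star deletions: a priori, over the course of one search the search could touch many distinct light stars, and the total size of all light stars is $\Omega(n)$, so a naive bound would be too weak. The resolution is that the search visits at most $O(r)$ stars total (by the step-count argument above), and each light star visited costs $O(\sqrt n \polylog n)$ for its bulk deletion, yielding $O(r\sqrt n \polylog n)$ --- but one must also confirm that a star, once its representative is reached and it is thereby removed from $B_<^\star$, is never reinserted during the same search, which follows because $a$ has entered $X$ and stays in $X$ for the rest of the search. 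The heavy-star bookkeeping is symmetric and easier: a heavy star contributes one $O(\polylog n)$ query per step but is simply ``switched off'' once its $a$ enters $X$. Assembling these pieces gives the claimed $O(r)$ step bound and $O(r\sqrt{n}\polylog n)$ time bound.
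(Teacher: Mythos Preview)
Your overall strategy matches the paper's: bound the number of relaxation steps by $O(r)$ using the facts that $\abs{\alive{A}} \le r$, that $\abs{\alive{B}\setminus B^\star} = O(r)$ via Lemma~\ref{lemma:supp_acyclic}, and that each star $B_a^\star$ contributes at most one explored node; then charge each step $O(\sqrt{n}\polylog n)$ by counting BCP insertions/deletions (the $O(r)$ per-structure insertions and the $\le 2\sqrt{n}$ light-star deletions, using $r\le\sqrt{n}$). The paper omits the proof, but this is exactly the accounting implicit in the surrounding text.

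There is, however, a setting confusion you should clean up. You repeatedly import machinery from Section~\ref{section:goldberg} that does not exist in the transportation network of Section~\ref{section:orlin}: there is no source $s$ or sink $t$, no dummy arcs, and hence no ``length-2/3-style'' alive-path data structures. The network here is simply the bipartite graph on $A\cup B$ with supply/demand, and dead nodes are \emph{ignored outright} (an augmenting path cannot pass through them), not skipped via alive paths. Likewise, there is no separate ``backward-arc priority queue'': in the uncapacitated residual network the backward arcs are precisely the reversed support arcs, which have reduced cost zero and are relaxed immediately as described just before Lemma~\ref{lemma:supp_acyclic}. None of this changes your asymptotics, but as written the proposal describes data structures that are not part of this algorithm. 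A minor secondary point: your justification that the number of active imbalanced nodes is $O(r)$ (``total excess is $O(r\Delta)$ by the contraction step'') is not quite the standard argument and would need more care; the paper simply asserts this fact.
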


Once an augmenting path is found and the augmentation is performed, the set of
imbalanced nodes and the support arcs change.
We thus need to update the sets $B^\star$, $B_a^\star$s, and $B_<^\star$.
This can be accomplished in $O(r\polylog n)$ amortized time.
When we begin a new Hungarian search, we use the rewinding mechanism
to set various BCP data structures in the right initial state.
Finally, when we move from one scale to another, we also update the sets
$\alive{A}$ and $\alive{B}$.
Omitting all the details, we conclude the following.

\begin{lemma}
Each Hungarian search can be performed in $O(r\sqrt{n}\polylog n)$ time.
\end{lemma}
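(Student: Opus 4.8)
The plan is to charge the cost of one Hungarian search to three tasks: (a) running the search once all BCP data structures are in their correct initial configuration; (b) producing that configuration, starting from the configuration left by the previous Hungarian search; and (c) the per-scale maintenance of the alive/dead partition of the nodes and of the light/heavy classification of $\alive{A}$, whose cost we amortize over the whole execution. Task (a) is precisely the preceding lemma: the search performs $O(r)$ relaxations, and relaxing an arc that brings a new node $a\in\alive{A}$ into $X$ costs $O(\sqrt{n}\polylog n)$ --- dominated by removing the light star $B_a^\star$ (of size $O(\sqrt{n})$) from $B_<^\star$ and by inserting $a$ into the $O(r)=O(\sqrt{n})$ star structures --- so task (a) costs $O(r\sqrt{n}\polylog n)$.

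For task (b) I would reuse the rewinding mechanism of Sections~\ref{section:hung} and~\ref{section:goldberg}. By the preceding lemma the previous search consisted of $O(r)$ relaxations, each an application of $O(\sqrt{n})$ BCP insertions/deletions; applying the inverse operations in reverse order restores every BCP structure (and the auxiliary bookkeeping of $X$, the stars, and the light/heavy split) to that search's initial state in $O(r\sqrt{n}\polylog n)$ time. It then remains to apply the effect of the intervening augmentation: the augmenting path has length $O(r)$, so $O(r)$ arcs enter or leave $\supp(f)$ and $O(r)$ nodes change imbalance status, which causes $O(r)$ memberships to change among $B^\star$, the stars $B_a^\star$, and $B_<^\star$. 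Each such change is a single BCP operation costing $O(\polylog n)$, except when a star crosses the $\sqrt{n}$ (or $2\sqrt{n}$) threshold and its owner is reclassified, which requires moving $O(\sqrt{n})$ of its elements between $B_<^\star$ and a dedicated structure; because the lazy thresholds guarantee $\Omega(\sqrt{n})$ updates to a star between two successive reclassifications of its owner, this is $O(\polylog n)$ amortized per star update, hence $O(r\polylog n)$ amortized per augmentation. So task (b) costs $O(r\sqrt{n}\polylog n)$ amortized.

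For task (c), note that a node turns alive only at the start of a scale (when $\Delta$ drops) and never reverts to dead, so over the entire run at most $r$ nodes of $A$ and at most $n$ nodes of $B$ ever become alive. Making $a\in A$ alive costs $O(\abs{B_a^\star}\polylog n)$ to build (or fold into $B_<^\star$) its star plus $O(r\polylog n)$ to insert $a$ into the existing star structures; summing over the $\le r$ such events, using that the stars are disjoint subsets of $B$ (so $\sum_a\abs{B_a^\star}\le n$) and that $r\le\sqrt{n}$ (so $r^2=O(n)$), this totals $O(n\polylog n)$. Making $b\in B$ alive costs $O(\polylog n)$ (one insertion), for a total of $O(n\polylog n)$. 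Thus task (c) contributes $O(n\polylog n)$ over the whole algorithm; this is dominated by the $O(rn^{3/2}\polylog n)$ overall running time, so we attribute it to preprocessing rather than to any single search. Summing (a)--(c) yields the claimed $O(r\sqrt{n}\polylog n)$ (amortized) time per Hungarian search.

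The step I expect to be the main obstacle is task (b): verifying that rewinding really recovers the exact initial configuration of \emph{every} BCP structure --- not just the search's own insertions/deletions but also the evolution of the initial $X$ as earlier augmentations depleted active excess (as in Section~\ref{section:hung}) and the attached light/heavy tags of the stars --- and checking that the amortized accounting is clean, i.e.\ that the $\Omega(\sqrt{n})$ star updates used to pay for a reclassification are never also charged for the rewinding of the same search. The remaining pieces are routine given the data structures developed in the preceding subsection.
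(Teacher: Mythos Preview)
Your proposal is correct and follows essentially the same approach as the paper. The paper itself omits the proof and only sketches the argument in the surrounding text: it invokes the preceding lemma for the search proper (your task~(a)), mentions the rewinding mechanism together with the $O(r\polylog n)$ amortized post-augmentation updates to $B^\star$, the $B_a^\star$'s, and $B_<^\star$ (your task~(b)), and notes the per-scale updates to $\alive{A}$ and $\alive{B}$ (your task~(c)); your decomposition and the amortization of light/heavy reclassification via the lazy $\sqrt{n}$/$2\sqrt{n}$ thresholds match the paper's discussion exactly.
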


Since there are $O(n\log n)$ augmentations and the flow in the original network
can be recovered from that in the contracted network in $O(n\polylog n)$
time~\cite{AFPVX17arxiv}, the total running time of the algorithm is
$O(rn^{3/2}\polylog n)$, as claimed in Theorem~\ref{theorem:orlin}.

\subparagraph*{Acknowledgment.}
We thank Haim Kaplan for discussion and suggestion to use Goldberg~\etal~\cite{GHKT17} algorithm.
We thank Debmalya Panigrahi for helpful discussions.

\bibliographystyle{newuser}
\bibliography{pm-arxiv}

\appendix

\end{document}